\documentclass[11pt]{article}
\usepackage[utf8]{inputenc}
\usepackage[margin=1in]{geometry}

\usepackage[rgb,table,usenames,dvipsnames]{xcolor}
\usepackage{pdfpages}

\usepackage{setspace}
\usepackage{verbatim}
\usepackage{subcaption}
\usepackage{boxedminipage}
\usepackage{csquotes}
\usepackage{graphicx}
\usepackage{siamproceedings}
\usepackage{hdb_macros_new}
\usepackage{fullpage}
\usepackage{algorithm}
\usepackage{algpseudocode}
\usepackage{braket}
\usepackage{stmaryrd}
\usepackage{fancyhdr}
\pagestyle{fancy}
\fancyhf{} 

\fancyfoot[R]{\scriptsize Copyright \textcopyright\ 2026 by SIAM\\
Unauthorized reproduction of this article is prohibited}

\fancypagestyle{plain}{
  \fancyhf{}
  
  \fancyfoot[R]{\scriptsize{Copyright \textcopyright\ 2026\\Copyright for this paper is retained by authors}}
}
\fancypagestyle{empty}{%
  \fancyhf{}%
  \fancyfoot[R]{\scriptsize{Copyright \textcopyright\ 2026\\Copyright for this paper is retained by authors}}%
}

\usepackage[multiuser,inline,nomargin]{fixme}
\FXRegisterAuthor{s}{es}{\color{red}Surendra}
\fxusetheme{color}




\usepackage{xcolor}
\hypersetup{
	colorlinks,
	linkcolor={red!75!black},
	citecolor={blue!75!black},
	urlcolor={red!75!black},
}
 \usepackage[hyperpageref]{backref}

\usepackage{esvect}

\title{Efficient derandomization of differentially private counting queries}
\author{Surendra Ghentiyala \thanks{Cornell University.  \email{sg974@cornell.edu}. This work is supported in part by the NSF under Grants Nos.~CCF-2122230 and CCF-2312296, a Packard Foundation Fellowship, and a generous gift from Google.}}
\date{}

\parskip=0.75ex

\begin{document}
\pagenumbering{roman}

\maketitle
\listoffixmes

\begin{abstract}
    Differential privacy for the 2020 census required an estimated 90 terabytes of randomness \cite{garfinkel2020randomness}, an amount which may be prohibitively expensive or entirely infeasible to generate. Motivated by these practical concerns, \cite{canonne2025randomness} initiated the study of the randomness complexity of differential privacy, and in particular, the randomness complexity of $d$ counting queries. This is the task of outputting the number of entries in a dataset that satisfy predicates $\mathcal{P}_1, \dots, \mathcal{P}_d$ respectively. They showed the rather surprising fact that though any reasonably accurate, $\varepsilon$-differentially private mechanism for one counting query requires $1-O(\varepsilon)$ bits of randomness in expectation, there exists a fairly accurate mechanism for $d$ counting queries which requires only $O(\log d)$ bits of randomness in expectation.

The mechanism of \cite{canonne2025randomness} is inefficient (not polynomial time) and relies on a combinatorial object known as rounding schemes. Here, we give a polynomial time mechanism which achieves nearly the same randomness complexity versus accuracy tradeoff as that of \cite{canonne2025randomness}. Our construction is based on the following simple observation: after a randomized shift of the answer to each counting query, the answer to many counting queries remains the same regardless of whether we add noise to that coordinate or not. This allows us to forgo the step of adding noise to the result of many counting queries. Our mechanism does not make use of rounding schemes. Therefore, it provides a different---and, in our opinion, clearer---insight into the origins of the randomness savings that can be obtained by batching $d$ counting queries.
\end{abstract}

\thispagestyle{empty}
\newpage
\pagenumbering{arabic}

\section{Introduction}
\label{sec:intro}
Differential privacy is the study of mechanisms that release statistical information about a dataset in a private manner. Informally, a mechanism $\mathcal{M}$ which takes as input a dataset and outputs a statistic, is differentially private, if the following holds. For any individual $i$, the output of $\mathcal{M}$ does not look significantly different in the case when $i$ is in the dataset versus when $i$ is not in the dataset (see \cref{def: differential_privacy}). Since any adversarial data analyst only sees the output of $\mathcal{M}$, any decision that the data analyst makes is essentially the same regardless of whether $i$ is in the dataset or not. Therefore, $i$ can be relatively certain that their membership in the dataset will not have adverse effects on themselves or others. Differential privacy has proven to be an extremely successful framework for private data analysis and machine learning \cite{dwork2006calibrating, dwork2010boosting, dankar2013practicing, abadi2016deep}.

Separately, complexity theory acknowledges that true randomness is both hard to generate and useful for algorithmic tasks \cite{motwani1996randomized}. As a result, complexity theorists view randomness as a valuable resource---much like time or space. This has led to a rich theory of randomness complexity which deals with both how to obtain true randomness from imperfect sources \cite{von1963various, trevisan2001extractors, chattopadhyay2016explicit} as well as derandomization \cite{nisan1994hardness, agrawal2004primes, TCS-010}. We take derandomization here to mean designing an algorithm which performs the same task as the randomized algorithm but uses fewer (or no) random bits.

At the intersection of these two areas, one can and should ask about the possibility of derandomizing differential privacy. The need for randomness efficient implementations of differentially private algorithms is perhaps best illustrated by \cite{garfinkel2020randomness}, which estimated that differential privacy for the 2020 census required 90 \emph{terabytes} of randomness. Motivated by both practical need and natural theoretical questions, \cite{canonne2025randomness} initiated the study of the randomness complexity of differential privacy.

\cite{canonne2025randomness} is concerned with derandomizing counting queries. Say we have a dataset of people $x \in \mathcal{X}^*$ and some predicates $\mathcal{P}_1, \dots, \mathcal{P}_d: \mathcal{X} \rightarrow \{ 0, 1\}$. For example, $\mathcal{P}_1$ may be the predicate ``this person is left-handed''. We wish to determine how many people satisfy $\mathcal{P}_1, \dots, \mathcal{P}_d$ respectively. \cite{canonne2025randomness} showed that if $d=1$, then any $\varepsilon$-differentially private and somewhat accurate mechanism requires $1-O(\varepsilon)$ bits of randomness in expectation (see lemma A.1 of \cite{canonne2025randomness} for a formal statement). One might therefore expect that $d$ counting queries would require $O(d)$ bits of randomness in expectation. Rather surprisingly, \cite{canonne2025randomness} showed that this is \emph{not} the case. One can design a mechanism which is fairly private and accurate (see \cref{subsec: results}) that requires $O(\log d)$ bits of randomness in expectation.

Unfortunately, the mechanism of \cite{canonne2025randomness} is not efficient and they leave it as an open question whether one can design a polynomial time mechanism. Here, we give an efficient derandomization of $d$ counting queries. Furthermore, unlike \cite{canonne2020discrete}, our construction does not make use of the combinatorial objects known as rounding schemes \cite{GeometryOfRounding, geometryOfRoundingOptimal}. We therefore believe that our scheme is more intuitive in many respects. For example, the source of the randomness complexity versus accuracy trade-off is clearer in our scheme (see \cref{subsec: technique}). However, we note that the connection to rounding schemes is still the only way that we know of to obtain lower bounds for the randomness complexity of $d$ counting queries \cite{canonne2025randomness}, a problem we do not address in this work.

\subsection{Our results}
\label{subsec: results}
At the cost of some redundancy, we will present our results for approximate differential privacy and pure differential privacy separately. Although there is significant overlap between these cases, we have split them up for two reasons. First, we believe that since the case for approximate differential privacy is simpler, it serves as a nice warm-up for the case of pure differential privacy. Second, we believe that working with too much generality may obscure our main insight. 

In the following, $R_0(\mathcal{M})$ denotes the maximum random complexity of the mechanism $\mathcal{M}$ and $R(\mathcal{M})$ denotes the expected randomness complexity over the coin tosses of $\mathcal{M}$ (see \cref{subsec: randomness_complexity}). See \cref{sec:prelims} for definitions of $(\varepsilon, \delta)$-differential privacy and $(\alpha, \beta)$-accuracy.
\subsubsection{Approximate differential privacy}
We begin by restating the result of \cite{canonne2025randomness} in the case of approximate differential privacy and then our theorem.
\begin{theorem}[\cite{canonne2025randomness}]
    For every \( d, \ell \in \mathbb{N} \), \( \varepsilon > 0 \), and \( \delta \in \left(0, \frac{1}{d}\right) \), there exists a mechanism $\mathcal{M} : (\{0,1\}^d)^* \rightarrow \mathbb{R}^d$
    that is \((\varepsilon, \delta)\)-differentially private with respect to \( d_{\text{ID}} \), satisfying
    \[
    R_0(\mathcal{M}) \leq \left\lceil \frac{d}{\ell} \right\rceil \cdot \log_2(\ell + 1) + \log_2\left(\frac{1}{\delta}\right) + O(1),
    \]
    such that, for every \( \beta \in (0,1] \), \( \mathcal{M} \) is \((\alpha, \beta)\)-accurate for summation, where
    \[
    \alpha = O\left( \frac{\sqrt{d} \cdot \log(1/\delta) \cdot \log(d/\beta)}{\varepsilon} + \frac{\ell \cdot \sqrt{d} \cdot \log(1/\delta)}{\varepsilon} \right).
    \]
    In particular, taking \( \ell = d \), we obtain $
    R_0(\mathcal{M}) \leq \log_2 d + \log_2\left(\frac{1}{\delta}\right) + O(1)$,
    and
    \[
    \alpha = O\left( \frac{\sqrt{d} \cdot \log(1/\delta) \cdot \log(d/\beta)}{\varepsilon} + \frac{d^{3/2} \cdot \log(1/\delta)}{\varepsilon} \right).
    \]
\end{theorem}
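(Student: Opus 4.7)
The plan is to break the task into a block structure. First, partition the $d$ predicates into $k := \lceil d/\ell \rceil$ blocks of size at most $\ell$, and design a building-block mechanism $\mathcal{M}_\ell$ that, given the $\ell$ true counts within a single block, outputs an $\ell$-dimensional noisy vector using only $\log_2(\ell+1)$ uniform random bits. Running $k$ independent copies of $\mathcal{M}_\ell$ (with independent randomness per block) and concatenating the outputs yields the overall mechanism; the total randomness usage then matches the stated bound up to an auxiliary $O(\log_2(1/\delta))$ overhead that I would spend on a single global randomization described below.

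To construct $\mathcal{M}_\ell$, I would use a rounding scheme of support size $\ell+1$ in $\ell$ dimensions. Concretely, fix $\ell+1$ representative points $p_0, \dots, p_\ell \in \mathbb{R}^\ell$ forming the vertices of a simplex (equivalently, vertices of the Voronoi cell of the $A_\ell$ root lattice), and for each integer count vector $c \in \mathbb{Z}^\ell$ define a distribution $\pi_c$ over these points with two properties: (i) a draw $p \sim \pi_c$ is at $\ell_\infty$ distance $O(\ell)$ from $c$, and (ii) for any pair of count vectors $c, c'$ arising from adjacent datasets, the distributions $\pi_c$ and $\pi_{c'}$ are $(\varepsilon', \delta')$-indistinguishable with $\varepsilon' \le \varepsilon / \sqrt{k \log(1/\delta)}$ and $\delta' \le \delta/(2k)$. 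Sampling from $\pi_c$ uses exactly $\log_2(\ell+1)$ bits, matching the per-block budget. The extra $\log_2(1/\delta)$ bits are spent on a single global random shift of the count vector (shared across blocks), which ensures that the ``bad'' cases where neighboring distributions have disjoint support occur with probability at most $\delta$; this is the standard way of paying for approximate differential privacy when forcing a discrete output.

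For privacy, since adjacent datasets differ by a single user, the count vectors differ in at most $\ell$ coordinates per block and by at most $1$ in each, so (ii) controls a single block; advanced composition over the $k$ blocks then yields $(\varepsilon, \delta)$-DP overall, with the failure of the global shift absorbed into $\delta$. For accuracy, (i) together with independence across blocks and a union bound over the $d$ coordinates produces both the $\log(d/\beta)$ factor in the first term and the $\sqrt{d}$ dependence, while the $\ell$ factor in the second term of $\alpha$ is simply the per-block rounding radius, amplified by $\sqrt{d}$ when aggregating independent block-level errors in $\ell_2$ before applying the union bound over coordinates.

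The main obstacle I anticipate is the construction of the distributions $\pi_c$ themselves: I need discrete distributions of support size only $\ell+1$ in $\ell$ dimensions whose divergences under adjacent inputs are tightly controlled in both $\varepsilon$ and $\delta$. Standard lattice quantizers give the right distance guarantees but say nothing about privacy, while standard noise mechanisms such as Gaussian or Laplace give the right privacy guarantees but need vastly more random bits. The combinatorial heart of the proof is showing that a rounding scheme in the sense of \cite{GeometryOfRounding, geometryOfRoundingOptimal} can bridge the two, and this is precisely where \cite{canonne2025randomness} appeals to existence results that, as the introduction notes, are not known to be constructive in polynomial time.
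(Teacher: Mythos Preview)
This theorem is not proved in the paper; it is quoted verbatim from \cite{canonne2025randomness} purely as a benchmark against which the paper's own mechanism (\cref{thm: approx_bigtm}) is compared. The only information the paper gives about how \cite{canonne2025randomness} establishes it is the remark in the introduction that their mechanism ``relies on a combinatorial object known as rounding schemes'' and is not polynomial-time. So there is no in-paper proof to compare your proposal against.

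That said, your sketch is broadly consistent with what the paper attributes to \cite{canonne2025randomness}: the block decomposition into $\lceil d/\ell\rceil$ groups, a per-block rounding scheme with support $\ell+1$ costing $\log_2(\ell+1)$ bits, and a global $O(\log(1/\delta))$-bit randomization. Two places where your outline is shakier than the rest: first, your accuracy explanation for the $\sqrt{d}$ factor (``aggregating independent block-level errors in $\ell_2$ before applying the union bound over coordinates'') does not match the $\ell_\infty$ accuracy criterion in \cref{def: accuracy}; the $\sqrt{d}\log(1/\delta)/\varepsilon$ scaling almost certainly enters through the grid width the rounding scheme must use to guarantee $(\varepsilon,\delta)$-indistinguishability against an $\ell_1$-sensitivity-$d$ query, not through an $\ell_2$ aggregation of errors. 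Second, invoking advanced composition across blocks is unnecessary here, since a single user affects every block simultaneously and the rounding scheme is applied once to the full $d$-dimensional count (partitioned into $\ell$-sized chunks only to control the support size), not as $k$ independent privacy mechanisms; treating the blocks as independently composed would double-count the sensitivity. These are the spots where a reader of \cite{canonne2025randomness} would likely find your reconstruction diverging from the actual argument.
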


\begin{restatable*}{theorem}{ApproxFinal}
    \label{thm: approx_bigtm}
    Let \( d, s \in \mathbb{N} \), \( \varepsilon > 0 \), and $\delta \leq e^{-\varepsilon/2}$. There exists a mechanism $\mathcal{M} : (\{0,1\}^d)^* \rightarrow \mathbb{R}^d$ running in expected time $\poly \left( n, d, \log s, \log \frac{1}{\varepsilon}, \log \log \frac{1}{\delta} \right)$ that is \((\varepsilon, \delta)\)-differentially private with respect to \( d_{\text{ID}} \), satisfying
    \[
    R(\mathcal{M}) \leq O \left( \frac{d \left( \polylog \frac{d}{\varepsilon} + \polylog \log \frac{1}{\delta} \right)}{s} + \log(s) \right),
    \]
    is \((\alpha, 0)\)-accurate for summation, where
    \[
    \alpha = O \left( \frac{\sqrt{d \ln(d) \cdot \ln \frac{1}{\delta}} \cdot s}{\varepsilon} \right).
    \]
    In particular, taking \( s = d (\polylog \frac{d}{\varepsilon} + \polylog \log \frac{1}{\delta}) \), we obtain a running time of $\poly(n, d, \log \frac{1}{\varepsilon}, \log \log \frac{1}{\delta})$, $
    R(\mathcal{M}) \leq O(\log d + \log(\polylog \frac{d}{\varepsilon} + \polylog \log \frac{1}{\delta}))$,
    and
    \[
    \alpha = O \left( \frac{d^{3/2} \sqrt{ \ln(d) \cdot \ln \frac{1}{\delta}} \cdot \left( \polylog \frac{d}{\varepsilon} + \polylog \log \frac{1}{\delta} \right)}{\varepsilon} \right).
    \]
\end{restatable*}

There are several differences between the above results. The two biggest are that our mechanism is efficient, but it only has a bound for the expected randomness usage, not a worst case randomness complexity bound. 

We now address a few other differences. Observe that the result of \cite{canonne2025randomness} is particularly fine-grained in terms of its randomness complexity; it only includes an additive $O(1)$ factor. The randomness complexity of our mechanism is essentially the same, except for a multiplicative $\polylog \frac{d}{\varepsilon} + \polylog \log \frac{1}{\delta}$ factor, which we generally expect to be quite small. In particular, in the extremal case of $s = d (\polylog \frac{d}{\varepsilon} + \polylog \log \frac{1}{\delta})$, we obtain a randomness complexity of $O(\log d + \log(\polylog \frac{d}{\varepsilon} + \polylog \log \frac{1}{\delta}))$, which we expect to be $O(\log d)$, other than for minuscule values of $\varepsilon, \delta$. When it comes to accuracy, our algorithm is always accurate and we therefore have no need for this $\beta$ value. Up to an additional small factor of $\sqrt{\ln(d)} \cdot \left( \polylog \frac{d}{\varepsilon} + \polylog \log \frac{1}{\delta} \right)$, the accuracy of our mechanism matches that of \cite{canonne2025randomness}.

\subsubsection{Pure differential privacy}
We now compare the theorem of \cite{canonne2025randomness} to our theorem in the case of pure differential privacy.
\begin{theorem}[\cite{canonne2025randomness}]
    For every $d, \ell, \varepsilon > 0$, there is a mechanism $
    \mathcal{M} : (\{0,1\}^d)^* \to \mathbb{R}^d$ that is $\varepsilon$-DP with respect to $\mathrm{d}_{\mathrm{ID}}$ satisfying
    \[
    R(\mathcal{M}) \leq \left\lceil \frac{d}{\ell} \right\rceil \cdot \log_2(\ell + 1) + O(1),
    \]
    such that, for every $\beta \in (0,1]$, $\mathcal{M}$ is $(\alpha, \beta)$-accurate for summation for
    \[
    \alpha = O\left(
    \frac{d \cdot \log(d/\beta)}{\varepsilon} + \frac{\ell \cdot d \cdot \log d}{\varepsilon}
    \right).
    \]
    In particular, taking $\ell = d$, we obtain $R(\mathcal{M}) \leq \log_2 d + O(1)$
    and
    \[
    \alpha = O\left(
    \frac{d \cdot \log(d/\beta)}{\varepsilon} + \frac{d^2 \cdot \log d}{\varepsilon}
    \right).
    \]
\end{theorem}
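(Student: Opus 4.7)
My plan is a partition-and-compose strategy. Partition the $d$ coordinates into $k = \lceil d / \ell \rceil$ disjoint blocks $B_1, \dots, B_k$, each of size at most $\ell$, and release the counts in each block with an independent sub-mechanism. If each block mechanism is $(\varepsilon / k)$-differentially private and consumes $\log_2(\ell + 1) + O(1/k)$ random bits in expectation, basic composition across the blocks yields overall $\varepsilon$-DP, and the randomness sums to $\lceil d / \ell \rceil \log_2(\ell + 1) + O(1)$ as claimed.

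The per-block mechanism is the main engine. For each block I need an $(\varepsilon/k)$-DP mechanism on $\ell$ integer-valued counts that consumes only $\log_2(\ell + 1)$ random bits. I would build this by applying a rounding scheme: a distribution over maps from $\mathbb{Z}^\ell$ into a sublattice $(w \mathbb{Z})^\ell$ that is unbiased coordinatewise, has per-coordinate rounding error $O(w)$, and induces output distributions on neighboring inputs with pointwise likelihood ratio at most $e^{\varepsilon/k}$. Taking $w = \Theta(d \log d / \varepsilon)$ makes the probability of a ``boundary crossing'' under a one-unit input change small enough to match the per-block privacy budget $\varepsilon/k = \varepsilon \ell / d$. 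The non-trivial fact is that such a rounding scheme exists and can be sampled with only $\log_2(\ell + 1)$ random bits in $\ell$ dimensions; this is precisely the content of the combinatorial-geometric constructions cited as \cite{GeometryOfRounding, geometryOfRoundingOptimal}, and it is both the main obstacle in the argument and the source of the mechanism's non-polynomial running time.

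For accuracy, the rounding error in each block is $O(w) = O(d \log d / \varepsilon)$ per coordinate, contributing $O(\ell \cdot w) = O(\ell d \log d / \varepsilon)$ over the $\ell$ queries of a block and matching the second term of $\alpha$. The first term $O(d \log(d/\beta) / \varepsilon)$ is a Laplace-type tail bound on the unbiased residual noise left behind by a rounding scheme of this form, obtained by a union bound over the $d$ coordinates to reach failure probability $\beta$. Combining the three pieces---block partition, the per-block rounding scheme, and the accuracy decomposition---yields the simultaneous bounds on $R(\mathcal{M})$ and $\alpha$ asserted in the theorem, and specializing to $\ell = d$ (a single block of size $d$) immediately recovers the stated corollary $R(\mathcal{M}) \le \log_2 d + O(1)$ with $\alpha = O(d \log(d/\beta)/\varepsilon + d^2 \log d / \varepsilon)$.
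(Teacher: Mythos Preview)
This theorem is not proved in the present paper. It is quoted verbatim from \cite{canonne2025randomness} solely for comparison with the paper's own mechanism (\cref{thm: pure_bigtm}); no proof or proof sketch of it appears anywhere in the text. There is therefore nothing in this paper against which to check your proposal.

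That said, your outline matches what the paper reports about the original argument: the introduction states explicitly that the mechanism of \cite{canonne2025randomness} ``relies on a combinatorial object known as rounding schemes'' and cites \cite{GeometryOfRounding, geometryOfRoundingOptimal}, and indeed the paper's main selling point is that its own construction \emph{avoids} rounding schemes. So a block decomposition combined with a per-block rounding scheme is the right shape for reconstructing the cited result.

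Two caveats on your sketch, independent of any comparison. First, your privacy step (``the probability of a boundary crossing under a one-unit input change is small enough'') is an approximate-DP style argument, but the theorem asserts pure $\varepsilon$-DP; a small bad-event probability does not bound a pointwise likelihood ratio. The more likely structure is that Laplace noise supplies the pure-DP guarantee and the rounding scheme is applied as postprocessing, with the rounding scheme controlling the \emph{randomness} cost rather than the privacy directly. Second, your accuracy bookkeeping is inconsistent: you say the per-coordinate rounding error is $O(w)$ and then aggregate to $O(\ell w)$ ``over the $\ell$ queries of a block,'' but accuracy here is in $\ell_\infty$, so a per-coordinate bound of $O(w)$ would remain $O(w)$. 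The factor $\ell$ in the target $\alpha$ more plausibly enters because $\ell$-dimensional rounding schemes of granularity $w$ have cells of diameter $\Theta(\ell w)$, so the per-coordinate error is already $O(\ell w)$.
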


\begin{restatable*}{theorem}{PureFinal}
    \label{thm: pure_bigtm}
    Let \( d, s \in \mathbb{N} \), and \( \varepsilon > 0 \). There exists a mechanism $\mathcal{M} : (\{0,1\}^d)^* \rightarrow \mathbb{R}^d$ running in expected time $\poly \left( n, d, \log s, \log \frac{1}{\varepsilon} \right)$ that is \((\varepsilon, 0)\)-differentially private with respect to \( d_{\text{ID}} \), satisfying
    \[
    R(\mathcal{M}) \leq O \left( \frac{d}{s} \cdot \polylog \frac{d}{\varepsilon} + \log s \right),
    \]
    is \((\alpha, \beta)\)-accurate for summation, where
    \[
    \alpha = O\left( \frac{d \cdot \log(d/\beta)}{\varepsilon} + \frac{s \ln(s) \cdot d \ln(\frac{d}{\varepsilon})}{\varepsilon} \right).
    \]
    In particular, taking \( s = d \polylog \frac{d}{\varepsilon} \), we obtain a running time of $\poly(n, d, \log \frac{1}{\varepsilon})$, $
    R(\mathcal{M}) \leq O(\log d + \log \polylog \frac{d}{\varepsilon})$,
    and
    \[
    \alpha = O\left( \frac{d \cdot \log(d/\beta)}{\varepsilon} + \frac{d^2 \cdot \polylog(\frac{d}{\varepsilon})}{\varepsilon} \right).
    \]
\end{restatable*}
We observe that the major differences between the above are that our mechanism is efficient while the result of \cite{canonne2025randomness} is particularly fine-grained. Our mechanism has essentially the same randomness complexity as that of \cite{canonne2020discrete} up to a multiplicative $\polylog \frac{d}{\varepsilon}$ factor. The randomness complexity we obtain in the extremal case of $s = d \polylog \frac{d}{\varepsilon}$ is $O(\log d + \log \polylog \frac{d}{\varepsilon})$, which we generally expect to be $O(\log d)$ except for truly minuscule values of $\varepsilon$. Finally, in terms of accuracy, we obtain the same result as \cite{canonne2025randomness} up to $\polylog \frac{d}{\varepsilon}$ factors (we assume $s \leq d \log \frac{d}{\varepsilon}$ since taking $s$ to be any larger has no asymptotically appreciable effect on randomness complexity), which we do not expect to be particularly large.

\subsection{Technique}
\label{subsec: technique}
Our mechanism is best understood as being built through a series of iterative refinements. We now describe how these refinements work in the case of approximate differential privacy (\cref{sec: approx}). The case for pure differential privacy (\cref{sec: pure}) is rather similar but requires more care in handling edge cases.

Suppose we have some dataset and properties $\mathcal{P}_1, \dots, \mathcal{P}_d$. Let $x \in (\{ 0, 1\}^d)^*$ be the vector such that ${x_i}_j = 1$ if and only if person $i$ satisfies property $j$. We would like to release $v \in \mathbb{N}^d$ where $v = \text{sum}(x)$. Of course, releasing $v$ would completely violate differential privacy. Instead consider the mechanism $\mathcal{M}_1$ which outputs $v + (\eta_1, \dots, \eta_d)$ where $\eta_i \sim \mathcal{N}_{\mathbb{Z}}(\sigma^2)$, the discrete Gaussian with appropriately chosen parameter $\sigma^2$. $\mathcal{M}_1$ is the well-known discrete Gaussian mechanism \cite{canonne2020discrete}, which is known to be approximately differentially private. Unfortunately, sampling each $\eta_i$ requires $\polylog( \sigma)$ bits of randomness. Therefore, sampling $(\eta_1, \dots, \eta_d)$ requires $O(d \cdot \polylog (\sigma))$ bits of randomness.

For our second refinement $\mathcal{M}_2$, let $r \in \mathbb{N}$ be an integer such that for $Y \sim \mathcal{N}_{\mathbb{Z}}(\sigma^2)$, with overwhelming probability, $|Y| \leq r$. Let $\mathcal{M}_2$ be the mechanism where we output $v + (\eta_1, \dots, \eta_d)$  where each $\eta_i$ is sampled from $\mathcal{N}_{\mathbb{Z}}(\sigma^2)$ conditioned on $|\eta_i| < r$. Rejection sampling bounds tell us that we are only going to need to sample each $\eta_i$ from $\mathcal{N}_{\mathbb{Z}}(\sigma^2)$ $O(1)$ times to ensure $|\eta_i| < r$. Therefore, sampling each $\eta_i$ remains an operation that only requires $\polylog(\sigma)$ bits in expectation. Crucially, $\mathcal{M}_2$ is still differentially private. Intuitively, this is because $\mathcal{M}_1$ was differentially private, and $\mathcal{M}_2$ is almost the same as the previous mechanism (formally, the statistical distance between the two is incredibly small). Therefore, if all adversaries $\mathcal{A}$ had difficulty distinguishing between the $\mathcal{M}_1(x), \mathcal{M}_1(x')$ before, $\mathcal{A}$ should have a hard time distinguishing between $\mathcal{M}_2(x), \mathcal{M}_2(x')$ since the statistical distance $\Delta(\mathcal{M}_1(y), \mathcal{M}_2(y))$ is very small for all $y$.

For our third refinement $\mathcal{M}_3$, let us fix some integer $s$ between $1$ and $d$. $\mathcal{M}_3$ samples $\omega$ uniformly from $\{ r, 2r, \dots, rs \}$ and then outputs $(\lfloor v_1 + \eta_1 + \omega \rfloor_{rs},  \dots, \lfloor v_d + \eta_d + \omega \rfloor_{rs})$ where each $\eta_i$ is sampled from $\mathcal{N}_{\mathbb{Z}}(\sigma^2)$ conditioned on $|\eta_i| < r$. Here, $\lfloor \cdot \rfloor_{rs}$ denotes rounding down to the nearest multiple of $rs$. $\mathcal{M}_3$ is differentially private by the well-known postprocessing property of differential privacy (\cref{lem: postprocessing}) which says that if $\mathcal{M}_2$ is differentially private, then $f \circ \mathcal{M}_2$ is differentially private for any randomized function $f$. In particular, $\mathcal{M}_3$ has the same distribution as taking the output of $\mathcal{M}_2$, call it $z$, sampling $\omega$ uniformly from $\{ r, 2r, \dots, rs \}$, and outputting $(\lfloor z_1 + \omega \rfloor_{rs}, \dots, \lfloor z_d + \omega \rfloor_{rs})$.

Now we are ready to make the key observation. Consider any coordinate $i \in [d]$. In $\mathcal{M}_3$, with high probability, $\lfloor v_i + \eta_i + \omega \rfloor_{rs}$ does not depend on $\eta_i$. In particular, with at least $1-2/s$ probability over our choice of $\omega$, the following holds.
\[ \lfloor v_i + \omega -r \rfloor_{rs} = \lfloor v_i + \omega +r \rfloor_{rs} \]
Therefore, regardless of what $\eta_i$ is, $\mathcal{M}_3$ outputs the same thing for coordinate $i$ since $\eta_i$ is guaranteed to be between $-r$ and $r$. This is where the savings in randomness complexity come from. We will design a mechanism where once we sample $\omega$, we will only sample $\eta_i$ if $\lfloor v_i + \omega + \eta_i \rfloor_{rs}$ is ambiguous. In expectation, only $2d/s$ coordinates $i$ will have have an ambiguous $\lfloor v_i + \omega -r \rfloor_{rs}$ (by linearity of expectation), so we will only need to sample $\eta_i$ $2d/s$ times in expectation. Notice that the sampling of $\omega$ requires $O(\log s)$ bits of randomness. Furthermore, the process we have outlined is computationally efficient.

The loss in accuracy comes from the rounding to $rs$. It should now be clear where the tradeoff between randomness complexity and accuracy comes from. As we increase $s$, we need to sample fewer $\eta_i$ (only $2d/s$), but we are forced to round to a coarser grid (one with side lengths $rs$).

\section{Preliminaries}
\label{sec:prelims}
\snote{Say that you want $d/\varepsilon$ to be an integer to simplify things.}
We will use the notation $\lfloor \cdot \rfloor_k$ to denote rounding down to a multiple of $k$. If the input to $\lfloor \cdot \rfloor_k$ is a vector, each coordinate is rounded down to a multiple of $k$. We use the notation $U[S]$ to denote a uniformly sampled value from $S$.
\subsection{Differential privacy}
We will define the requisite terms only for the case of differentially private vector sum. For more general definitions, see \cite{dwork2014algorithmic, canonne2025randomness}. Any dataset $x \subseteq \mathcal{X}^n$ is a set of $n$ entries from some domain $\mathcal{X}$. At a high level, we say that a mechanism is differentially private if its output is close in statistical distance on $x, x'$ for any two datasets $x, x'$ which differ on exactly one entry. The notion of ``differing on one entry'' is underspecified. So we first define the $d_{\mathrm{ID}}$ database distance and then formally define differential privacy. For succinctness, the database metric is always assumed to be $d_{\mathrm{ID}}$ unless otherwise specified.

\begin{definition}[$d_{\mathrm{ID}}$ \cite{canonne2025randomness}]
    \label{def: database_dist}
    For two databases \( x, x' \in \mathcal{X}^* \), their \emph{insert-delete distance} is the minimum number of insertions and deletions of elements of \( \mathcal{X} \) needed to transform \( x \) into \( x' \).
\end{definition}

\begin{definition}[differential privacy \cite{canonne2025randomness}]
    \label{def: differential_privacy}
    Fix \( \varepsilon > 0 \) and \( \delta \in [0, 1] \).
    A randomized algorithm $\mathcal{M} : \mathcal{X}^* \to Y$
    is \((\varepsilon, \delta)\)-differentially private (or \((\varepsilon, \delta)\)-DP) with respect to database metric \( d \) if for every pair of adjacent databases \( x \sim_d x' \) in \( \mathcal{X}^* \) and every measurable \( S \subseteq Y \), we have
    \[
    \Pr[ \mathcal{M}(x) \in S ] \leq e^{\varepsilon} \cdot \Pr[ \mathcal{M}(x') \in S ] + \delta.
    \]
    If \( \delta = 0 \), we simply say that \( \mathcal{M} \) is \( \varepsilon \)-DP.
\end{definition}

Just as in \cite{canonne2025randomness}, we will restrict our attention to the case where $\mathcal{X} = \{ 0, 1\}^d$ and a single entry from $\mathcal{X}$ tells us if an individual satisfies properties $\mathcal{P}_1, \dots, \mathcal{P}_d$ respectively. Of course, the trivial mechanism $\mathcal{M}(x) = 0$ satisfies privacy but is not at all accurate. We want our mechanism to satisfy the following accuracy property.

\begin{definition}[accuracy \cite{canonne2025randomness}]
    \label{def: accuracy}
    Given \( \alpha \geq 0 \) and \( \beta \in [0, 1] \), a randomized algorithm $\mathcal{M} : (\{0, 1\}^d)^n \to \mathbb{R}^d$ is said to be \((\alpha, \beta)\)-accurate for summation if for every \( x \in (\{0, 1\}^d)^* \), we have
    \[
    \Pr\left[ \| \mathcal{M}(x) - \mathrm{sum}(x) \|_\infty > \alpha \right] \leq \beta,
    \]
    where
    \[
    \mathrm{sum}(x) = \sum_{i=1}^{|x|} x_i \in \mathbb{R}^d.
    \]
\end{definition}

One of the properties of differential privacy that will be key to our mechanisms is postprocessing, which says that applying any (possibly randomized) function $f$ to the output of a differentially private algorithm does not harm differential privacy.
\begin{lemma}[Postprocessing]
    \label{lem: postprocessing}
    Let $M: \mathcal{X} \rightarrow \mathcal{Y}$ be an $(\varepsilon, \delta)$ differentially private mechanism and $f: \mathcal{Y} \rightarrow \mathcal{Z}$ be a randomized function. $f \circ M: \mathcal{X} \rightarrow \mathcal{Z}$ is $(\varepsilon, \delta)$ differentially private.
\end{lemma}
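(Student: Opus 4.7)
The plan is to reduce the randomized postprocessing statement to a deterministic one and then invoke the definition of $(\varepsilon,\delta)$-DP directly. First, I would handle the special case where $f$ is deterministic. For any measurable $T \subseteq \mathcal{Z}$, the preimage $S := f^{-1}(T) \subseteq \mathcal{Y}$ is measurable, so applying the $(\varepsilon,\delta)$-DP guarantee of $M$ to the event $\{M(\cdot) \in S\}$ immediately yields
\[
\Pr[f(M(x)) \in T] = \Pr[M(x) \in S] \leq e^\varepsilon \Pr[M(x') \in S] + \delta = e^\varepsilon \Pr[f(M(x')) \in T] + \delta,
\]
which is exactly the bound we want.

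For general randomized $f$, I would view $f$ as a deterministic function $g(y, R)$ of its input $y$ and an independent random seed $R$, so that $f(M(x))$ has the same distribution as $g(M(x), R)$ with $R$ independent of the coin tosses of $M$. For any measurable $T \subseteq \mathcal{Z}$, conditioning on $R = r$ and applying the deterministic case gives $\Pr[g(M(x), r) \in T] \leq e^\varepsilon \Pr[g(M(x'), r) \in T] + \delta$ for each fixed $r$. Integrating both sides against the distribution of $R$ and using Fubini (justified by nonnegativity) together with linearity of expectation then yields $\Pr[f(M(x)) \in T] \leq e^\varepsilon \Pr[f(M(x')) \in T] + \delta$, which is what we wanted.

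The only subtlety is measurability: we need $f^{-1}(T)$ (or equivalently $\{y : g(y,r) \in T\}$ for each $r$) to be measurable in $\mathcal{Y}$, and we need the joint event $\{g(M(x), R) \in T\}$ to be measurable in the product space so that Fubini applies. Both are standard consequences of modeling randomized algorithms as measurable Markov kernels, so this is bookkeeping rather than a genuine obstacle. I do not anticipate any substantive difficulty with this lemma; the main point is simply that the ``deterministic postprocessing'' step is an inclusion-of-events argument, and the randomized case follows by averaging the deterministic bound over the auxiliary randomness.
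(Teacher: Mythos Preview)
Your argument is correct and is the standard proof of the postprocessing lemma. The paper itself does not supply a proof of this lemma; it is stated there as a well-known fact from the differential privacy literature, so there is nothing to compare against beyond noting that your deterministic-preimage-then-average-over-randomness approach is exactly the textbook proof.
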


\subsection{Randomness complexity}
\cite{canonne2025randomness} defined the randomness complexity of a mechanism. We note that the randomness complexity of all our mechanisms is stated in terms of $R(\mathcal{M})$, which is the \emph{expected} number of random bits used by $\mathcal{M}$ over the coin tosses of $\mathcal{M}$. We only define $R_0(\mathcal{M})$ for comparison with \cite{canonne2025randomness} in \cref{subsec: results}.
\label{subsec: randomness_complexity}
\begin{definition}[\cite{canonne2025randomness}]
    For a randomized algorithm \( \mathcal{M} : \mathcal{X}^n \to Y \), we define:
\begin{itemize}
    \item \( R(\mathcal{M}) \): the maximum over \( x \in \mathcal{X}^n \) of the expected number of random bits used by \( \mathcal{M} \) on input \( x \), where the expectation is taken over the coin tosses of \( \mathcal{M} \).
    \item \( R_0(\mathcal{M}) \): the maximum over \( x \) of the maximum number of random bits used by \( \mathcal{M} \) on input \( x \).
\end{itemize}

\end{definition}

\subsection{Discrete Gaussian}
For our approximately differentially private mechanism (\cref{sec: approx}), we will make use of discrete Gaussian noise. The discrete Gaussian has the crucial advantage over the continuous gaussian that it is easy to work with and samplable on a finite machine. We first define the discrete gaussian and outline how the discrete Gaussian mechanism is approximately differentially private.
\begin{definition}[$\mathcal{N}_{\mathbb{Z}}(\sigma^2)$]
    The distribution $\mathcal{N}_{\mathbb{Z}}(\sigma^2)$ is supported on $x \in \mathbb{Z}$ and has the following probability mass function.
    \[ \underset{X \sim \mathcal{N}_{\mathbb{Z}}(\sigma^2)}{\mathbb{P}}[X =x] = \frac{e^{-(x^2)/2\sigma^2}}{\sum_{y \in \mathbb{Z}} e^{-(y^2)/2\sigma^2}} \]
\end{definition}

\begin{lemma}[Discrete Gaussian Privacy \cite{canonne2020discrete}]
    \label{lem: gaussian_privacy}
    Let $\sigma, \varepsilon>0$. Let $d/\sigma^2 \leq \rho^2$. Define the randomized mechanism for $d$ counting queries $\mathcal{M}: (\{ 0, 1\}^d)^n \rightarrow \mathbb{Z}^d$ by $\mathcal{M}(x) = \text{sum}(x) + Y$ for $Y \sim \mathcal{N}_{\mathbb{Z}}(0, \sigma^2)^d$. $\mathcal{M}$ satisfies $\frac{1}{2} \rho^2$-concentrated differential privacy and therefore satisfies $(\varepsilon, \delta)$-differential privacy for $\delta = e^{-(\varepsilon-0.5\rho^2)^2/2\rho^2}$.
\end{lemma}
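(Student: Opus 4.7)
\begin{proofsketch}
The plan is to first establish that $\mathcal{M}$ satisfies $\tfrac{1}{2}\rho^2$-concentrated differential privacy by bounding, for every order $\alpha>1$, the R\'enyi divergence of the output distribution on neighboring databases, and then to invoke the standard conversion from CDP to $(\varepsilon,\delta)$-DP.

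First I would fix neighboring databases $x\sim x'$. Because one is obtained from the other by inserting or deleting a single entry $e\in\{0,1\}^d$, we have $\mathrm{sum}(x)-\mathrm{sum}(x')=\pm e$, so the $\ell_2$-sensitivity of $\mathrm{sum}$ is at most $\sqrt{d}$; the outputs $\mathcal{M}(x)$ and $\mathcal{M}(x')$ are translates of $\mathcal{N}_{\mathbb{Z}}(\sigma^2)^d$ by vectors whose difference has $\ell_2$-norm at most $\sqrt{d}$.

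The heart of the argument is a bound on the R\'enyi divergence between two translates of the one-dimensional discrete Gaussian: for integer shift $a$ and order $\alpha$, one shows
\[
D_\alpha\bigl(\mathcal{N}_{\mathbb{Z}}(\sigma^2)+a \,\|\, \mathcal{N}_{\mathbb{Z}}(\sigma^2)\bigr) \leq \frac{\alpha\, a^2}{2\sigma^2},
\]
matching the continuous Gaussian bound. I would expect this to be the main technical obstacle, since the naive expansion of $D_\alpha$ leaves normalizer factors of the form $\sum_{y\in\mathbb{Z}} e^{-(y-ca)^2/2\sigma^2}$ that must be controlled. The standard route (carried out in \cite{canonne2020discrete}) applies Poisson summation to the theta-function $\sum_{y\in\mathbb{Z}} e^{-(y-b)^2/2\sigma^2}$, showing it is periodic in $b$ with period $1$ and maximized at integer $b$, so the problematic factors cancel or can be absorbed. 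Tensorization of R\'enyi divergence across the $d$ independent coordinates then yields
\[
D_\alpha(\mathcal{M}(x)\,\|\,\mathcal{M}(x')) \leq \frac{\alpha\, \|\mathrm{sum}(x)-\mathrm{sum}(x')\|_2^2}{2\sigma^2} \leq \frac{\alpha d}{2\sigma^2} \leq \frac{\alpha \rho^2}{2},
\]
which is precisely $\tfrac{1}{2}\rho^2$-CDP by definition.

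Finally, I would convert CDP to $(\varepsilon,\delta)$-DP by a standard Chernoff argument on the privacy loss random variable $Z$: the CDP bound gives $\mathbb{E}[e^{(\alpha-1)Z}]\leq e^{(\alpha-1)\alpha\rho^2/2}$ for all $\alpha>1$, and Markov's inequality applied to $e^{(\alpha-1)Z}$ yields $\Pr[Z>\varepsilon]\leq \exp\!\bigl((\alpha-1)\alpha\rho^2/2-(\alpha-1)\varepsilon\bigr)$. Optimizing with $\alpha = 1 + (\varepsilon-\rho^2/2)/\rho^2$ produces the stated $\delta = e^{-(\varepsilon-\rho^2/2)^2/(2\rho^2)}$.
\end{proofsketch}
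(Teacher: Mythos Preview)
The paper does not prove this lemma; it is stated as a cited result from \cite{canonne2020discrete} with no accompanying proof. Your sketch correctly reconstructs the standard argument from that reference: bound the R\'enyi divergence of translated one-dimensional discrete Gaussians via the Poisson-summation/theta-function analysis, tensorize across coordinates to get $\tfrac{1}{2}\rho^2$-CDP, and then apply the Chernoff-style CDP-to-$(\varepsilon,\delta)$ conversion with the optimal choice of $\alpha$.
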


\begin{corollary}
    \label{cor: gaussian_privacy}
    Assume $\delta \leq e^{-\varepsilon/2}$. Let $\sigma^2 = 4d\ln(1/\delta)/\varepsilon^2$. Define the randomized mechanism for $d$ counting queries $\mathcal{M}: (\{ 0, 1\}^d)^n \rightarrow \mathbb{Z}^d$ by $\mathcal{M}(x) = \text{sum}(x) + Y$ for $Y \sim \mathcal{N}_{\mathbb{Z}}(0, \sigma^2)^d$. $\mathcal{M}$ is $(\varepsilon, \delta)$ differentially private.
\end{corollary}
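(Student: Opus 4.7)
The plan is to apply Lemma~\ref{lem: gaussian_privacy} with an appropriately chosen $\rho$. Given the noise parameter $\sigma^2 = 4d\ln(1/\delta)/\varepsilon^2$, the natural choice is to saturate the constraint $d/\sigma^2 \leq \rho^2$ by setting $\rho^2 = \varepsilon^2/(4\ln(1/\delta))$. Then the lemma guarantees that $\mathcal{M}$ is $(\varepsilon, \delta')$-differentially private for $\delta' = e^{-(\varepsilon - 0.5\rho^2)^2/(2\rho^2)}$, and the goal is reduced to checking $\delta' \leq \delta$.

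The main step is then a direct computation. I would first use the hypothesis $\delta \leq e^{-\varepsilon/2}$, equivalently $\ln(1/\delta) \geq \varepsilon/2$, to bound $\rho^2 = \varepsilon^2/(4\ln(1/\delta)) \leq \varepsilon/2$, so that $0.5\rho^2 \leq \varepsilon/4$ and $\varepsilon - 0.5\rho^2 \geq 3\varepsilon/4$. Plugging this lower bound into the exponent,
\[
\frac{(\varepsilon - 0.5\rho^2)^2}{2\rho^2} \;\geq\; \frac{(3\varepsilon/4)^2}{2\rho^2} \;=\; \frac{9\varepsilon^2/16}{\varepsilon^2/(2\ln(1/\delta))} \;=\; \frac{9}{8}\ln(1/\delta) \;\geq\; \ln(1/\delta),
\]
which gives $\delta' = e^{-(\varepsilon - 0.5\rho^2)^2/(2\rho^2)} \leq \delta$, as desired.

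There is no real obstacle here; the corollary is a straightforward specialization of the lemma. The only subtlety is that the conversion from concentrated differential privacy to $(\varepsilon, \delta)$-DP in Lemma~\ref{lem: gaussian_privacy} requires $\varepsilon$ to be at least of order $\rho^2$, which is precisely what the hypothesis $\delta \leq e^{-\varepsilon/2}$ ensures (it forces $\rho^2$ to be small relative to $\varepsilon$). Once this compatibility is verified, the rest is algebraic manipulation.
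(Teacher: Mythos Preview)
Your proposal is correct and follows essentially the same approach as the paper: set $\rho^2 = \varepsilon^2/(4\ln(1/\delta))$ so that $d/\sigma^2 = \rho^2$, and then verify the inequality $(\varepsilon - 0.5\rho^2)^2/(2\rho^2) \geq \ln(1/\delta)$ needed to invoke Lemma~\ref{lem: gaussian_privacy}. The only difference is cosmetic---the paper expands the left-hand side exactly to $2\ln(1/\delta) - \varepsilon/2 + \varepsilon^2/(32\ln(1/\delta))$ and then uses $\ln(1/\delta) \geq \varepsilon/2$, whereas you first bound $\varepsilon - 0.5\rho^2 \geq 3\varepsilon/4$ and obtain the slightly stronger $\tfrac{9}{8}\ln(1/\delta)$; both arguments are equally valid.
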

\begin{proof}
    Let $\rho^2 = \varepsilon^2/(4 \ln(1/\delta))$. Then $d/\sigma^2 = \rho^2$. We therefore only need to check $e^{-(\varepsilon-0.5\rho^2)^2/2\rho^2} \leq \delta$ to apply \cref{lem: gaussian_privacy}. It suffices to show the following.
    \begin{align*}
        \frac{(\varepsilon-\frac{1}{2} \rho^2)^2}{2 \rho^2} &\geq \ln(1/\delta)
    \end{align*}

    The left hand side is equal to the following.
    \[ \frac{(\varepsilon-\frac{1}{2} \rho^2)^2}{2 \rho^2}
    = \frac{(\varepsilon-\frac{\varepsilon^2}{8 \ln(1/\delta)})^2}{\frac{\varepsilon^2}{2 \ln(1/\delta)}}
    = 2 \ln(1/\delta) - \frac{\varepsilon}{2} + \frac{\varepsilon^2}{32 \ln(1/\delta)} \]

    To show that this is greater than $\ln(1/\delta)$, it suffices to show the following.
    \[ \ln(1/\delta) - \frac{\varepsilon}{2} + \frac{\varepsilon^2}{32 \ln(1/\delta)} \geq 0\]

    Which is implied by the fact that $\delta \leq e^{-\varepsilon/2}$.
\end{proof}

We will also make use of the following tail bound and randomness complexity bound for sampling from the discrete Gaussian.
\begin{lemma}[Discrete Laplace tail bound \cite{canonne2020discrete}]
    \label{lem: gauss_tail}
    \[ \underset{X \sim \mathcal{N}_{\mathbb{Z}}(0, \sigma^2)}{\mathbb{P}}[X \geq \lambda] \leq e^{-\lambda^2/2\sigma^2}\]
\end{lemma}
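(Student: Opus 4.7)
The plan is to prove this tail bound via a standard translation argument on the discrete Gaussian density. Let $Z = \sum_{y \in \mathbb{Z}} e^{-y^2/2\sigma^2}$ denote the normalization constant in the definition of $\mathcal{N}_{\mathbb{Z}}(0,\sigma^2)$. We may assume without loss of generality that $\lambda$ is a nonnegative integer: only integer values contribute to $\mathbb{P}[X \geq \lambda]$, so replacing $\lambda$ by $\lceil \lambda \rceil$ does not change the probability, and the right-hand side $e^{-\lambda^2/2\sigma^2}$ is monotone non-increasing in $|\lambda|$, so strengthening $\lambda$ to $\lceil \lambda \rceil$ only makes the desired inequality harder. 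For $\lambda \leq 0$ the bound is trivial since the right-hand side is at least $1$ whenever the left-hand side is a probability.

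Next I would execute the shift. Writing $x = y + \lambda$ and expanding $(y+\lambda)^2 = y^2 + 2y\lambda + \lambda^2$,
\[ \mathbb{P}[X \geq \lambda] \;=\; \frac{1}{Z} \sum_{x=\lambda}^{\infty} e^{-x^2/2\sigma^2} \;=\; \frac{e^{-\lambda^2/2\sigma^2}}{Z} \sum_{y=0}^{\infty} e^{-y^2/2\sigma^2} \cdot e^{-y\lambda/\sigma^2}. \]
Since $y,\lambda \geq 0$, the cross term $e^{-y\lambda/\sigma^2}$ is at most $1$, so the inner sum is at most $\sum_{y=0}^{\infty} e^{-y^2/2\sigma^2}$, which in turn is at most $Z$ (the full bilateral sum over $\mathbb{Z}$). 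Substituting this back gives
\[ \mathbb{P}[X \geq \lambda] \;\leq\; \frac{e^{-\lambda^2/2\sigma^2}}{Z} \cdot Z \;=\; e^{-\lambda^2/2\sigma^2}, \]
which is exactly the claimed bound.

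I do not anticipate a serious obstacle here; the whole argument is just the discrete analogue of the elementary Gaussian tail bound obtained by translating the density. The only point requiring a bit of care is the reduction to integer $\lambda$ and the observation that the translated Gaussian weights $e^{-(y+\lambda)^2/2\sigma^2}$ at nonnegative integers can be upper-bounded by the Gaussian weights at nonnegative integers (by dropping the $e^{-y\lambda/\sigma^2}$ factor), rather than trying to relate them to some other normalization. This avoids ever needing to evaluate $Z$ explicitly.
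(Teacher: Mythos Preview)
Your argument is correct: the shift $x=y+\lambda$ followed by dropping the cross factor $e^{-y\lambda/\sigma^2}\le 1$ and bounding the remaining half-sum by the full normalization $Z$ is exactly the standard proof of the discrete Gaussian tail bound, and your reduction to nonnegative integer $\lambda$ is handled properly.

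There is nothing to compare against in the paper itself: the lemma is stated with a citation to \cite{canonne2020discrete} and no proof is given. (The name ``Discrete Laplace tail bound'' in the lemma header is a typo; the statement and your proof are for the discrete Gaussian.) Your write-up is essentially the same as the argument in the cited source, so nothing is missing.
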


\begin{lemma}[Discrete Gaussian sampling complexity \cite{canonne2020discrete}]
    \label{lem: gauss_sample}
    Say $\sigma$ has a $k$ bit binary representation. There exists an algorithm to sample from $\mathcal{N}_{\mathbb{Z}}(0, \sigma^2)$ with expected running time $\poly(k)$ which uses $\poly(k)$ bits of randomness in expectation.
\end{lemma}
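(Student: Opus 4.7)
The plan is to implement the sampler of Canonne, Kamath, and Steinke as a three-layer rejection scheme: (i) a primitive for sampling a Bernoulli with a real-valued bias, (ii) a sampler for the discrete Laplace $\mathcal{L}_{\mathbb{Z}}(t)$ with integer scale $t = \lceil \sigma \rceil$, and (iii) rejection sampling from this discrete Laplace proposal to the discrete Gaussian target.

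For the Bernoulli primitive, I would draw a uniform $U \in [0,1]$ bit by bit and compare it to the bits of $p$ until the comparison resolves. Because the bits of $U$ are uniform, each bit resolves the comparison with probability at least $1/2$, so this uses $O(1)$ random bits in expectation. Since only $O(1)$ bits of $p$ are consulted in expectation, it suffices that $p$ be computable to arbitrary precision in $\poly(k)$ time per bit; every bias I will need is of the form $e^{-q}$ for a rational $q$ with $O(k)$-bit representation, which is computable to any precision in $\poly(k)$ time by standard rational approximation of $\exp$.

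For the discrete Laplace, I would use the decomposition $|Y| = V t + U$ with $U \in \{0,\ldots,t-1\}$, based on the factorization
\[
\Pr[|Y| = v t + u] \;\propto\; e^{-(v t + u)/t} = e^{-v}\cdot e^{-u/t},
\]
so that $V$ and $U$ are independent, with $V$ geometric of parameter $1-e^{-1}$ and $U$ having PMF proportional to $e^{-u/t}$. I would sample $U$ by rejection from the uniform distribution on $\{0,\ldots,t-1\}$ (using $O(k)$ bits per proposal) with acceptance probability $e^{-U/t}$; this acceptance probability averages to a positive constant, so $O(1)$ proposals suffice in expectation. I would sample $V$ directly as a geometric with a fixed constant success probability, costing $O(1)$ expected Bernoulli calls. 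Attaching an independent random sign (with the usual fix-up at $0$) yields the signed discrete Laplace sample.

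For the outer rejection step, the ratio of target to proposal PMFs satisfies
\[
\frac{\Pr_{Y \sim \mathcal{N}_{\mathbb{Z}}(\sigma^2)}[Y = y]}{\Pr_{Y \sim \mathcal{L}_{\mathbb{Z}}(t)}[Y = y]} \;\propto\; \exp\!\left(\frac{|y|}{t} - \frac{y^2}{2\sigma^2}\right),
\]
which is maximized at $|y| = \sigma^2/t$ with value $\exp(\sigma^2/(2t^2)) \leq e^{1/2}$ since $t \geq \sigma$. After appropriate rescaling the acceptance probability is bounded below by a universal constant, so $O(1)$ outer iterations suffice in expectation. Combining the three layers gives expected $\poly(k)$ running time and expected $O(k) \subseteq \poly(k)$ random bits, as claimed. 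The main obstacle I anticipate is not the structure of the algorithm but the numerical bookkeeping: I need to verify that each real-valued bias (notably $e^{-U/t}$ and the Gaussian-to-Laplace ratio after normalization) can be computed to the lazy precision demanded by the Bernoulli primitive in $\poly(k)$ time, and to argue that the output is exactly distributed as $\mathcal{N}_{\mathbb{Z}}(\sigma^2)$. The latter holds because each Bernoulli call only ever commits to finitely many bits of its bias and of the uniform being compared, so no approximation error is introduced into the final output distribution.
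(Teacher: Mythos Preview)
The paper does not supply a proof of this lemma at all; it is simply quoted as a black-box result from \cite{canonne2020discrete}, so there is nothing to compare against. Your proposal is a faithful reconstruction of the Canonne--Kamath--Steinke sampler itself: the three-layer structure (lazy-bit Bernoulli, discrete Laplace via the $|Y|=Vt+U$ decomposition, and rejection from Laplace to Gaussian with $t=\lceil\sigma\rceil$) and the bound $\exp(|y|/t - y^2/(2\sigma^2)) \le e^{1/2}$ are exactly what appears there. The argument is correct; the only places that deserve a line more of care in a full write-up are (a) the sign fix-up at $0$ so that the discrete Laplace is not double-weighted at the origin, and (b) the observation that the \emph{overall} acceptance rate of the outer rejection step is a constant, which requires comparing the two normalizing constants $\sum_y e^{-y^2/(2\sigma^2)} \asymp \sigma$ and $\sum_y e^{-|y|/t} \asymp t$ and using $t\asymp\sigma$, not just that the pointwise ratio is bounded.
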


\subsection{Discrete Laplace}
For our purely differentially private mechanism (\cref{sec: pure}), we will use the discrete Laplace mechanism. We begin by giving two equivalent definitions of the discrete Laplace distributions as well the differential privacy properties of the discrete Laplace distribution.
\begin{definition}[$\text{Lap}_{\mathbb{Z}}(t)$ \cite{inusah2006discrete}]
    \label{lem: discrete_lap_dist}
    The distribution $\text{Lap}_{\mathbb{Z}}(t)$ is supported on $x \in \mathbb{Z}$ and has the following probability mass function.
    \[ \underset{X \sim \text{Lap}_{\mathbb{Z}}(t)}{\mathbb{P}}[X =x] = \frac{e^{1/t}-1}{e^{1/t}+1} e^{-|x|/t} \]
    Equivalently, $X \sim \text{Lap}_{\mathbb{Z}}(t)$ is identically distributed to $U-V$ where $U, V$ are i.i.d geometric random variables with parameter $1-e^{-1/t}$.
\end{definition}

\begin{lemma}[Discrete Laplace Privacy \cite{canonne2020discrete}]
    \label{lem: discrete_lap_private}
    Let $\varepsilon > 0$. Say $f: \mathcal{X}^n \rightarrow \mathbb{Z}$ has $\ell_1$ sensitivity $\Delta_1(f) = \max_{x \sim_d x'} \| f(x) - f(x') \|_1$. Define a randomized algorithm $M: \mathcal{X}^n \rightarrow \mathbb{Z}$ by $M(x) = q(x) + \text{Lap}_{\mathbb{Z}}(\Delta_1(f)/\varepsilon)$. Then $M$ satisfies $(\varepsilon, 0)$ differential privacy.
\end{lemma}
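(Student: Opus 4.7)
The plan is to verify $(\varepsilon,0)$-differential privacy by establishing the standard pointwise ratio bound $\Pr[M(x)=y]/\Pr[M(x')=y] \leq e^\varepsilon$ for every $y\in\mathbb{Z}$ and every pair of adjacent databases $x \sim_d x'$. Since the mechanism is integer-valued and the statement involves no additive $\delta$ slack, a pointwise bound on the probability mass function transfers directly to arbitrary measurable sets $S\subseteq\mathbb{Z}$ by summation, giving the required privacy guarantee.

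First I would write $\Pr[M(x)=y] = \Pr[\mathrm{Lap}_{\mathbb{Z}}(t) = y - f(x)]$ with $t = \Delta_1(f)/\varepsilon$, and similarly for $x'$, then substitute the closed-form mass function from \cref{lem: discrete_lap_dist}. Upon taking the ratio, the normalizing constant $(e^{1/t}-1)/(e^{1/t}+1)$ cancels and what remains is the purely exponential quantity
\[
\frac{\Pr[M(x)=y]}{\Pr[M(x')=y]} \;=\; \exp\!\left(\frac{|y-f(x')| - |y-f(x)|}{t}\right).
\]

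Next I would invoke the reverse triangle inequality, $|y-f(x')| - |y-f(x)| \leq |f(x)-f(x')|$, and then the sensitivity bound $|f(x)-f(x')| \leq \Delta_1(f)$ which holds because $x\sim_d x'$. Plugging this into the exponent gives $\exp(\Delta_1(f)/t) = \exp(\varepsilon)$, so $\Pr[M(x)=y] \leq e^\varepsilon \Pr[M(x')=y]$ pointwise. Summing this inequality over any $S\subseteq\mathbb{Z}$ then yields $\Pr[M(x)\in S] \leq e^\varepsilon \Pr[M(x')\in S]$, which is the definition of $(\varepsilon,0)$-DP.

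The argument is essentially a direct calculation, and I do not anticipate any genuine obstacle: the only conceptual point is that the discrete Laplace mass function has the same $e^{-|x|/t}$ exponential form as the continuous Laplace density, so the classical Laplace mechanism proof goes through verbatim with sums replacing integrals. No separate handling is required for the boundary case $y = f(x)$ or $y = f(x')$ since the $|\cdot|$ function is everywhere defined and the triangle inequality step is unconditional.
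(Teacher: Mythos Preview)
Your argument is correct and is the standard proof of the discrete Laplace mechanism's pure DP guarantee: cancel the normalizing constant, apply the reverse triangle inequality to bound the exponent by $|f(x)-f(x')|/t$, and then invoke the sensitivity bound to get $e^\varepsilon$. The paper itself does not supply a proof for this lemma---it is stated with a citation to \cite{canonne2020discrete} and used as a black box---so there is nothing further to compare against; your proposal simply fills in the omitted details in the expected way.
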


We will also make use of the following tail bound and randomness complexity bound for sampling from the discrete Laplace distribution.
\begin{lemma}[Discrete Laplace tail bound]
    \label{lem: discrete_lap_tail}
    Let $Y$ be distributed as $\text{Lap}_{\mathbb{Z}}(1/\varepsilon)$. For any $m \in \mathbb{N}$ the following tail bound holds.
    \[ \mathbb{P}[Y \geq m] = \mathbb{P}[Y \leq -m] = \frac{e^{-\varepsilon(m-1)}}{e^\varepsilon + 1} \leq e^{-\varepsilon(m-1)} \]
\end{lemma}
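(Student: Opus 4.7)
The plan is to compute both tail probabilities directly from the probability mass function of $\text{Lap}_{\mathbb{Z}}(1/\varepsilon)$ given in \cref{lem: discrete_lap_dist}, and then note the symmetry and a trivial bound. Substituting $t = 1/\varepsilon$, the PMF becomes
\[ \mathbb{P}[Y = k] = \frac{e^{\varepsilon} - 1}{e^{\varepsilon} + 1} \cdot e^{-\varepsilon |k|}, \]
so the upper tail is a geometric series:
\[ \mathbb{P}[Y \geq m] = \frac{e^{\varepsilon} - 1}{e^{\varepsilon} + 1} \sum_{k=m}^{\infty} e^{-\varepsilon k} = \frac{e^{\varepsilon} - 1}{e^{\varepsilon} + 1} \cdot \frac{e^{-\varepsilon m}}{1 - e^{-\varepsilon}}. \]

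Next I would simplify using the identity $1 - e^{-\varepsilon} = (e^{\varepsilon} - 1)/e^{\varepsilon}$, which makes the $(e^{\varepsilon} - 1)$ factors cancel and leaves
\[ \mathbb{P}[Y \geq m] = \frac{e^{-\varepsilon m} \cdot e^{\varepsilon}}{e^{\varepsilon} + 1} = \frac{e^{-\varepsilon(m-1)}}{e^{\varepsilon} + 1}, \]
which is the claimed exact expression. The equality $\mathbb{P}[Y \leq -m] = \mathbb{P}[Y \geq m]$ follows immediately from the fact that the PMF depends on $k$ only through $|k|$, so $Y$ and $-Y$ have the same distribution. Finally, the inequality $\frac{e^{-\varepsilon(m-1)}}{e^{\varepsilon} + 1} \leq e^{-\varepsilon(m-1)}$ is immediate from $e^{\varepsilon} + 1 \geq 1$.

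There is no genuine obstacle here; the statement reduces to a one-line geometric sum plus a symmetry observation. The only small care is in confirming the algebraic simplification $(e^{\varepsilon} - 1)/(1 - e^{-\varepsilon}) = e^{\varepsilon}$ so that the prefactor collapses cleanly into $e^{\varepsilon}/(e^{\varepsilon} + 1)$ and produces the $(m-1)$ shift in the exponent.
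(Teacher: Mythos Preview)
Your proof is correct; the paper states this lemma without proof (it is a standard fact), and your direct geometric-series computation from the PMF in \cref{lem: discrete_lap_dist} together with the symmetry observation is exactly the natural argument.
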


\begin{lemma}[Discrete Laplace sampling complexity \cite{canonne2020discrete}]
    \label{lem: lap_sample}
    Say $t$ has a $k$ bit binary representation. There exists an algorithm to sample from $\text{Lap}_{\mathbb{Z}}(t)$ with expected running time $\poly(k)$ which uses $\poly(k)$ bits of randomness in expectation.
\end{lemma}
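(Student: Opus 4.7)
My plan is to reduce sampling from $\text{Lap}_{\mathbb{Z}}(t)$ to simpler sub-problems and bound the randomness complexity of each. Assume $t$ is a positive integer given in binary; the rational case is analogous by tracking a numerator and denominator. By the second characterization in \cref{lem: discrete_lap_dist}, sampling $Y \sim \text{Lap}_{\mathbb{Z}}(t)$ is equivalent to drawing two i.i.d.\ geometric random variables $U, V$ with parameter $p = 1 - e^{-1/t}$ and returning $U - V$, so it suffices to sample from $\text{Geom}(p)$ using $\poly(k)$ expected bits. I decompose $Z \sim \text{Geom}(p)$ as $Z = tQ + R$ with $R \in \{0, 1, \ldots, t - 1\}$; a direct calculation of the joint probability shows that $Q$ and $R$ are independent, with $Q \sim \text{Geom}(1 - e^{-1})$ (no dependence on $t$) and $R$ distributed on $\{0, 1, \ldots, t - 1\}$ with mass proportional to $e^{-r/t}$. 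Hence it suffices to sample $Q$ and $R$ separately.

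The workhorse is an exact sampler $\text{BernExp}(\gamma)$ for $\text{Bern}(e^{-\gamma})$ with $\gamma \in [0, 1]$, using $\poly(k)$ bits in expectation when $\gamma$ has a $k$-bit representation. The standard von Neumann construction is: for $j = 1, 2, 3, \ldots$, sample $A_j \sim \text{Bern}(\gamma / j)$, and at the first $j$ with $A_j = 0$ return $j \bmod 2$. A telescoping of the alternating series for $e^{-\gamma}$ gives output mean $\sum_{j \text{ odd}} (\gamma^{j-1}/(j-1)!)(1 - \gamma/j) = e^{-\gamma}$, and the expected number of inner trials is at most $\sum_{j \geq 1} \gamma^{j-1}/(j-1)! \leq e$. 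Each inner $\text{Bern}(\gamma/j)$ is simulated by inspecting the binary digits of $\gamma/j$ one at a time against fresh fair coin flips, using $O(k + \log j)$ expected bits, so BernExp uses $\poly(k)$ expected bits overall.

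I then assemble the sampler. For $Q \sim \text{Geom}(1 - e^{-1})$, repeatedly call $\text{BernExp}(1)$ until the first $0$; the expected number of calls is $O(1)$. For $R$, use rejection: draw $J$ uniformly from $\{0, \ldots, t - 1\}$ using $k$ bits, then accept with probability $e^{-J/t}$ via one call to $\text{BernExp}(J/t)$ and restart on rejection. The average acceptance probability is $\frac{1}{t}\sum_{j=0}^{t-1} e^{-j/t} \geq 1 - e^{-1}$, so only $O(1)$ iterations occur in expectation, each costing $\poly(k)$ bits. Combining $Q$ and $R$ into $Z = tQ + R$, drawing two independent such samples, and outputting their difference yields $Y \sim \text{Lap}_{\mathbb{Z}}(t)$ using $\poly(k)$ expected bits and $\poly(k)$ expected time. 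The main obstacle is verifying correctness and the $O(1)$-expected-iteration bound for BernExp; once the alternating series identity is established, the remaining steps are routine independence and rejection sampling calculations.
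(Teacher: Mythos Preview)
Your argument is correct and is precisely the construction of \cite{canonne2020discrete}; the paper itself does not give a proof of this lemma but simply cites that reference, so there is nothing to compare against beyond noting that you have reproduced the cited algorithm faithfully. One minor quibble: drawing $J$ uniformly from $\{0,\dots,t-1\}$ cannot be done with exactly $k$ bits when $t$ is not a power of two, but rejection sampling handles this with $O(k)$ expected bits, which does not affect your $\poly(k)$ conclusion.
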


\subsection{Sampling}
It will often be useful for us to rejection sample $X$ based on some event happening. The following lemma shows that this is not more significantly time or randomness intensive than just sampling $X$. 
\begin{lemma}
    \label{lem: rejection_sampling}
    Let $X$ be a random variable such that the following holds. There exists an algorithm to sample from $X$ which has expected running time $T$ and requires $K$ bits of randomness in expectation. Let $E$ be an event checkable in time $T$ which happens with probability $p>0$. We can sample from $X$ conditioned on $E$ in expected time $O(T/p)$ using $k/p$ bits in expectation.
\end{lemma}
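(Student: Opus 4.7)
The plan is textbook rejection sampling. I would define the algorithm as follows: draw independent samples $X_1, X_2, \ldots$ using the given sampler for $X$, evaluate the predicate for $E$ on each, and return $X_N$, where $N$ is the first index for which $E(X_N)$ holds. Correctness is immediate: for each $i$, conditional on $E$ occurring in trial $i$, $X_i$ has distribution $X \mid E$, so the returned value has the desired distribution.

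For the complexity bounds, the key observation is that $N$ is geometrically distributed with success probability $p$, so $\mathbb{E}[N] = 1/p$. Let $T_i$ denote the (random) time to draw $X_i$ and evaluate the predicate for $E(X_i)$; the $T_i$ are i.i.d.\ with $\mathbb{E}[T_i] = O(T)$. I would bound the expected total running time by
\[ \mathbb{E}\!\left[\sum_{i=1}^N T_i \right] = \sum_{i=1}^\infty \mathbb{E}\!\left[T_i \cdot \mathbf{1}[N \geq i]\right] = \sum_{i=1}^\infty \mathbb{E}[T_i] \cdot \Pr[N \geq i] = O(T) \cdot \sum_{i=1}^\infty (1-p)^{i-1} = O(T/p), \]
where the second equality uses that $\{N \geq i\}$ is determined by $X_1, \ldots, X_{i-1}$ and therefore independent of $T_i$. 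The identical argument applied to the per-trial randomness usage $R_i$ (with $\mathbb{E}[R_i] = K$) yields the desired $K/p$ upper bound on expected randomness.

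I do not expect any substantial obstacle here; the only subtle point is the independence used to split the expectation, which is handled by observing that the stopping event $\{N \geq i\}$ is measurable with respect to the first $i-1$ trials and hence independent of the resources consumed in trial $i$. Everything else is a geometric series calculation.
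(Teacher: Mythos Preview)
Your proposal is correct and follows essentially the same approach as the paper: both arguments sum the per-trial cost over all trials, use that the cost of trial $i$ is independent of the event $\{N \geq i\}$ (the paper phrases this as $\mathbb{E}[Y_i \mid Y_i \neq 0] = K$), and evaluate the resulting geometric series to get $K/p$ and $O(T/p)$. If anything, you are more explicit about the independence step than the paper.
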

\begin{proof}
    Let $Y_i$ denote the number of random bits needed to draw the $i^{\text{th}}$ sample of $X$ and let $Y_i = 0$ if the rejection sampling algorithm has succeeded before draw $i$. Let $Y = \sum_{i=1}^\infty Y_i$
    \begin{align*}
        \mathbb{E}[Y] &= \sum_{i=1}^\infty \mathbb{E}[Y_i]
        = \sum_{i=1}^\infty \mathbb{E}[Y_i | Y_i \neq 0] \cdot \mathbb{P}[Y_i \neq 0]
        = \sum_{i=1}^\infty K \cdot (1-p)^{i-1}
        = K/p
    \end{align*}
    The same exact analysis tells us that the expected running time is $O(T/p)$.
\end{proof}

\begin{lemma}[\cite{canonne2020discrete}]\footnote{This is implicit in Algorithm 2 of \cite{canonne2020discrete} where they have an intermediate variable which is distributed as $\text{Ber}(1-e^{-1/t})$.}
    \label{lem: ber_exp_sampling}
    If $\gamma$ has a $k$ bit binary representation, then there exists an algorithm to sample from $\text{Ber}(1-e^{-1/\gamma})$ with expected running time $\poly(k)$ which uses $\poly(k)$ bits of randomness in expectation.
\end{lemma}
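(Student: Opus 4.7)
The plan is to build an efficient sampler for $\text{Ber}(e^{-1/\gamma})$ and then flip its output, following the \texttt{BernoulliExp} routine of \cite{canonne2020discrete}. The argument has three pieces: (i) a base-case sampler for $\text{Ber}(e^{-\lambda})$ when $\lambda \in (0,1]$; (ii) a reduction from the case $\lambda > 1$ to the base case; and (iii) a randomness and running-time analysis. Instantiating with $\lambda = 1/\gamma$ (using (i) when $\gamma \geq 1$ and (ii) when $\gamma < 1$) and complementing the output produces a sample from $\text{Ber}(1 - e^{-1/\gamma})$.

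For (i), the base-case sampler initializes $K \gets 1$ and repeatedly draws $A_K \sim \text{Ber}(\lambda/K)$, incrementing $K$ until the first time $A_K = 0$; it then returns $1$ if $K$ is odd and $0$ if $K$ is even. A telescoping computation will give $\mathbb{P}[K \geq k] = \lambda^{k-1}/(k-1)!$, from which
\[
\mathbb{P}[\text{output} = 1] = \sum_{j \geq 0} \left( \frac{\lambda^{2j}}{(2j)!} - \frac{\lambda^{2j+1}}{(2j+1)!} \right) = \sum_{k \geq 0} \frac{(-\lambda)^k}{k!} = e^{-\lambda}.
\]
The expected value of $K$ is $e^{\lambda} \leq e$, so only $O(1)$ iterations are needed in expectation. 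Each iteration samples a Bernoulli with rational parameter whose denominator has $O(k + \log K)$ bits; this can be done using that many unbiased coin flips via uniform-integer thresholding with constant-probability rejection, for a total of $\poly(k)$ expected bits per base-case call.

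For (ii), when $1/\gamma > 1$, I would use the identity $e^{-1/\gamma} = (e^{-1})^{\lfloor 1/\gamma \rfloor} \cdot e^{-\{1/\gamma\}}$. The algorithm performs up to $\lfloor 1/\gamma \rfloor$ independent base-case calls at $\lambda = 1$, aborting with output $0$ as soon as any returns $0$; if all succeed, it makes one final base-case call at $\lambda = \{1/\gamma\} \in [0,1)$ and returns its result. The main obstacle here is that $\lfloor 1/\gamma \rfloor$ can be as large as $2^k$, so the worst-case depth is exponential; the $\poly(k)$ bound will hold only in expectation, via the geometric tail of the abort event. Concretely, each base-case call at $\lambda = 1$ returns $0$ with probability $1 - e^{-1} > 0$, so the expected number of base-case invocations is at most $\sum_{j \geq 0} e^{-j} = e/(e-1) = O(1)$, independent of $\lfloor 1/\gamma \rfloor$.

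Combining (i) and (ii), the complete sampler performs $O(1)$ base-case invocations in expectation, each of which runs in $\poly(k)$ time and consumes $\poly(k)$ random bits; multiplying gives overall expected running time and randomness $\poly(k)$, as required. Flipping the final bit turns $\text{Ber}(e^{-1/\gamma})$ into $\text{Ber}(1 - e^{-1/\gamma})$ at no additional cost, completing the construction.
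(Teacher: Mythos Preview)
Your proposal is correct and matches the approach the paper implicitly relies on: the paper does not prove this lemma at all but simply cites \cite{canonne2020discrete}, noting in the footnote that the desired $\text{Ber}(1-e^{-1/t})$ sample appears as an intermediate variable in their Algorithm~2. Your reconstruction of the \texttt{BernoulliExp} routine---the von~Neumann-style alternating series for $e^{-\lambda}$ when $\lambda\in(0,1]$, together with the product decomposition and early-abort geometric argument for $\lambda>1$---is exactly that algorithm, and your expected-cost analysis is sound.
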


\section{Approximate differential privacy}
\label{sec: approx}
We first state our main result for the case of approximate differential privacy. We will then give the mechanism, and prove its privacy, time and randomness complexity, and accuracy separately.
\ApproxFinal
\begin{proof}
    Combine \cref{lem: approx_privacy}, \cref{lem: approx_complexity}, and \cref{lem: approx_accuracy}.
\end{proof}

\subsection{The algorithm}
For the sake of analysis, we give a series of algorithms that all have the desired differential privacy guarantee, though only \cref{alg: approx_mec4} is the one which we will show has low randomness complexity.
\begin{algorithm}
\caption{Mechanism 1}
\label{alg: approx_mec1}
\begin{algorithmic}
\Require $x \in (\{ 0, 1\}^d)^n$, $\sigma$

\State $\eta \sim  \mathcal{N}_{\mathbb{Z}}(0, \sigma^2)^d$

\noindent \Return $y \leftarrow \text{sum}(x)+\eta$

\end{algorithmic}
\end{algorithm}

\begin{algorithm}
\caption{Mechanism 2}
\label{alg: approx_mec2}
\begin{algorithmic}
\Require $x \in (\{ 0, 1\}^d)^n$, $\sigma$, and integer $r > 0$

\State $\eta \sim  \mathcal{N}_{\mathbb{Z}}(0, \sigma^2)^d$ conditioned on $\| \eta \|_{\infty} < r$

\noindent \Return $y \leftarrow \text{sum}(x)+\eta$

\end{algorithmic}
\end{algorithm}

\begin{algorithm}
\caption{Mechanism 3}
\label{alg: approx_mec3}
\begin{algorithmic}
\Require $x \in (\{ 0, 1\}^d)^n$, $\sigma$, and integers  $r, s > 0$

\State $\omega \sim U[\{ 1, \dots, s \}] \cdot r$

\State $\eta \sim  \mathcal{N}_{\mathbb{Z}}(0, \sigma^2)^d$ conditioned on $\| \eta \|_{\infty} < r$

\State $y \leftarrow \text{sum}(x)+ (\omega, \dots, \omega) + \eta$

\noindent \Return $\lfloor y \rfloor_{rs}$

\end{algorithmic}
\end{algorithm}

\snote{In the below, $\lfloor \text{sum(x)}_i + \omega \rfloor_{rs} = \lfloor \text{sum(x)}_i + \omega +r \rfloor_{rs}$ needs to be changed to $\lfloor \text{sum(x)}_i + \omega -r \rfloor_{rs} = \lfloor \text{sum(x)}_i + \omega +r \rfloor_{rs}$}
\begin{algorithm}
\caption{Mechanism 4}
\label{alg: approx_mec4}
\begin{algorithmic}
\Require $x \in (\{ 0, 1\}^d)^n$, $\sigma$, and integers $r, s > 0$

\State $\omega \sim U[\{ 1, \dots, s \}] \cdot r$

\For {$i \in [1, d]$}

\If {$\lfloor \text{sum(x)}_i + \omega - r \rfloor_{rs} = \lfloor \text{sum(x)}_i + \omega +r \rfloor_{rs}$}
    \State $y_i \leftarrow \lfloor \text{sum(x)}_i + \omega \rfloor_{rs}$
\Else
\State Resample $\eta_i \sim  \mathcal{N}_{\mathbb{Z}}(0, \sigma^2)$ until $|\eta_i| < r$

\State $y \leftarrow \lfloor \text{sum}(x)_i+ \omega + \eta_i \rfloor_{rs}$
\EndIf
\EndFor

\noindent \Return $(y_1, \dots, y_d)$

\end{algorithmic}
\end{algorithm}

\begin{lemma}
    \label{lem: same_dist_approx}
    For all $\sigma, r, s$, \cref{alg: approx_mec3} and \cref{alg: approx_mec4} have the same output distribution.
\end{lemma}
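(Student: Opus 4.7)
The plan is to condition on the uniform shift $\omega$ (which is sampled identically in both algorithms) and then compare the two mechanisms coordinate by coordinate. The first observation to make is that, because $\mathcal{N}_{\mathbb{Z}}(0,\sigma^2)^d$ is a product distribution and the conditioning event $\|\eta\|_\infty < r$ factors as $\bigcap_{i=1}^d \{|\eta_i|<r\}$, the conditional law of $\eta$ in \cref{alg: approx_mec3} equals the product of $d$ independent copies of $\mathcal{N}_{\mathbb{Z}}(0,\sigma^2)$ each conditioned on $|\cdot|<r$. Since \cref{alg: approx_mec4} already draws its randomness coordinate by coordinate after fixing $\omega$, conditional on $\omega$ the output vector factors as a product across $i \in [d]$ in both algorithms, and it suffices to prove marginal agreement for each $y_i$.

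Next, I would fix a coordinate $i$ and the value of $\omega$ and split into two cases mirroring the branch in \cref{alg: approx_mec4}. In the ``unambiguous'' case, $\lfloor \text{sum}(x)_i + \omega - r \rfloor_{rs} = \lfloor \text{sum}(x)_i + \omega + r \rfloor_{rs}$; call this common value $c$. For any integer $\eta_i$ with $|\eta_i| < r$ we have $\text{sum}(x)_i + \omega - r \le \text{sum}(x)_i + \omega + \eta_i \le \text{sum}(x)_i + \omega + r$, so by monotonicity of $\lfloor \cdot \rfloor_{rs}$ the middle floor is sandwiched between $c$ and $c$ and equals $c$. Hence \cref{alg: approx_mec3} deterministically outputs $c$ in this coordinate, and the specialization $\eta_i = 0$ in the sandwich identifies $c$ with $\lfloor \text{sum}(x)_i + \omega \rfloor_{rs}$, which is exactly the value that \cref{alg: approx_mec4} assigns to $y_i$.

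In the complementary ``ambiguous'' case, \cref{alg: approx_mec4} rejection-samples $\eta_i$ from $\mathcal{N}_{\mathbb{Z}}(0,\sigma^2)$ until $|\eta_i| < r$, producing $\eta_i$ with the same distribution as the coordinate marginal in \cref{alg: approx_mec3}, and then applies the identical map $t \mapsto \lfloor \text{sum}(x)_i + \omega + t \rfloor_{rs}$. So the marginal distribution of $y_i$ matches here as well. Combining the two cases with the product structure from the first paragraph then yields equality of output distributions for every input $x$. I do not expect any real obstacle: the only moving parts are recognizing the product decomposition of the conditioned noise and invoking the floor-monotonicity sandwich, so I expect the full proof to be short and essentially a bookkeeping exercise.
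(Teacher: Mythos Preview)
Your proposal is correct and follows essentially the same approach as the paper: condition on $\omega$, use the product structure so that it suffices to check each coordinate marginal, and then split into the two cases of the branch in \cref{alg: approx_mec4}. If anything you are slightly more explicit than the paper, spelling out why the conditioning $\|\eta\|_\infty<r$ preserves the product form and giving the floor-monotonicity sandwich that justifies $\lfloor \text{sum}(x)_i+\omega+\eta_i\rfloor_{rs}=\lfloor \text{sum}(x)_i+\omega\rfloor_{rs}$ in the unambiguous case, points the paper simply asserts as ``clear.''
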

\begin{proof}
    It suffices to show that conditioned on $\omega$, the output distributions of $(y_1, \dots, y_d)$ are the same. Fix $\omega$ and consider the distribution of outputs of \cref{alg: approx_mec3} and \cref{alg: approx_mec4} conditioned on $ \omega$, call these $\mathcal{D}_3, \mathcal{D}_4$. Notice that in the output of both $\mathcal{D}_3$ and $\mathcal{D}_4$, each coordinate is distributed independently, so we simply need to show that the distributions of coordinate $i$ of $\mathcal{D}_3$ and coordinate $i$ of $\mathcal{D}_4$, call these $X_i, Y_i$, have the same distribution. If $\lfloor \text{sum}(x)_i + \omega -r \rfloor_{rs} = \lfloor \text{sum}(x)_i + \omega + r \rfloor_{rs}$, then the distributions of $X_i$ and $Y_i$ are clearly the same since $X_i = \lfloor \text{sum}(x)_i + \omega \rfloor_{rs} = Y_i$.  If $\lfloor \text{sum}(x)_i + \omega -r \rfloor_{rs} \neq \lfloor \text{sum}(x)_i + \omega + r \rfloor_{rs}$, then the distributions $X_i, Y_i$ are clearly the same by definition.
\end{proof}

\subsection{Privacy}

To show that \cref{alg: approx_mec4} is private, we first show the following lemma, which states that small statistical shifts to the output distribution of a mechanism do not significantly affect privacy.
\begin{lemma}
    \label{lem: stat_dist}
    Let $\mathcal{M}$ and $\mathcal{M}'$ be mechanisms such that for all inputs $x$, the statistical distance between $\mathcal{M}(x)$ and $\mathcal{M}(x')$ is at most $\gamma$. If $\mathcal{M}$ is $(\varepsilon, \delta)$-DP, then $\mathcal{M}'$ is $(\varepsilon, (e^{\varepsilon}+1)\gamma + \delta)$-DP.
\end{lemma}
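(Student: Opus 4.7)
The plan is to reduce the statement to the definition of differential privacy by sandwiching $\mathcal{M}'$ between two applications of the total variation distance bound, one on each side of the $(\varepsilon,\delta)$-DP inequality for $\mathcal{M}$. (I am reading the hypothesis as ``the statistical distance between $\mathcal{M}(x)$ and $\mathcal{M}'(x)$ is at most $\gamma$ for every $x$''; the statement as typed says $\mathcal{M}(x')$ but the two mechanisms must be compared on the same input for this to make sense.) Once the reduction is set up, the rest is a one-line calculation.

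Concretely, I would fix arbitrary adjacent databases $x \sim_{d} x'$ and a measurable set $S \subseteq \mathcal{Y}$, and then chain three inequalities. First, because the statistical distance between $\mathcal{M}(x)$ and $\mathcal{M}'(x)$ is at most $\gamma$, we have
\[
\Pr[\mathcal{M}'(x) \in S] \leq \Pr[\mathcal{M}(x) \in S] + \gamma.
\]
Second, applying the $(\varepsilon,\delta)$-DP guarantee for $\mathcal{M}$ on the adjacent pair $x,x'$ yields
\[
\Pr[\mathcal{M}(x) \in S] \leq e^{\varepsilon} \Pr[\mathcal{M}(x') \in S] + \delta.
\]
Third, applying the statistical distance bound again, now on input $x'$, gives $\Pr[\mathcal{M}(x') \in S] \leq \Pr[\mathcal{M}'(x') \in S] + \gamma$. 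Substituting the last into the previous chain produces
\[
\Pr[\mathcal{M}'(x) \in S] \leq e^{\varepsilon} \Pr[\mathcal{M}'(x') \in S] + (e^{\varepsilon}+1)\gamma + \delta,
\]
which is exactly the $(\varepsilon,(e^{\varepsilon}+1)\gamma + \delta)$-DP condition for $\mathcal{M}'$.

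There is essentially no obstacle here beyond making sure that the total variation bound is applied in the correct direction (using the standard fact that if $\Delta(P,Q)\leq \gamma$ then $|P(S)-Q(S)|\leq \gamma$ for every measurable $S$), and that the asymmetry between the two inequalities produces the $e^{\varepsilon}$ multiplier on one of the two $\gamma$ terms. The only thing worth noting is that the lemma is symmetric in $x$ and $x'$, so by swapping their roles we simultaneously obtain the reverse inequality, giving the full DP statement.
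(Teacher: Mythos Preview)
Your proof is correct and follows exactly the same three-step chain as the paper: apply the statistical distance bound on input $x$, invoke the $(\varepsilon,\delta)$-DP guarantee for $\mathcal{M}$, and apply the statistical distance bound again on input $x'$. You also correctly identify the typo in the hypothesis (it should compare $\mathcal{M}(x)$ with $\mathcal{M}'(x)$).
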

\begin{proof}
    Let $S$ be any measurable set and $x$ and $x'$ be any two neighboring datasets.
    \begin{align*}
        \mathbb{P}[\mathcal{M}'(x) \in S] &\leq \mathbb{P}[\mathcal{M}(x) \in S] + \gamma\\
        &\leq e^{\varepsilon} \cdot \mathbb{P}[\mathcal{M}(x') \in S] + \delta + \gamma\\
        &= e^{\varepsilon} \cdot (\mathbb{P}[\mathcal{M}'(x') \in S]+\gamma) + \delta + \gamma\\
        &= e^{\varepsilon} \cdot \mathbb{P}[\mathcal{M}'(x') \in S]+ (e^\varepsilon+1)\gamma + \delta \\
    \end{align*}
\end{proof}

\begin{lemma}
    \label{lem: approx_privacy}
    Assume $\delta \leq e^{-\varepsilon/2}$. Let $\gamma = \frac{1}{2} \delta/(e^\varepsilon + 1)$, $\sigma^2 = 4d\ln(2/\delta)/\varepsilon^2$ and $r = \sigma \sqrt{2 \ln(d) \cdot \ln (1/\gamma)}$. Then \cref{alg: approx_mec4} is $(\varepsilon, \delta)$ differentially private.
\end{lemma}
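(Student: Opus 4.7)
The plan is to chain together four mechanisms, each obtained from the previous by a small modification, and propagate the $(\varepsilon,\delta)$-DP guarantee through each step. The starting point is \cref{alg: approx_mec1}: since $\sigma^2 = 4d\ln(2/\delta)/\varepsilon^2 = 4d\ln(1/(\delta/2))/\varepsilon^2$ and $\delta/2 \leq \delta \leq e^{-\varepsilon/2}$, \cref{cor: gaussian_privacy} applied with privacy parameter $\delta/2$ in place of $\delta$ tells us that \cref{alg: approx_mec1} is $(\varepsilon,\delta/2)$-DP. The slack of $\delta/2$ is reserved for the statistical perturbations introduced by the later modifications.

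Next, I would bound the total variation distance between the output of \cref{alg: approx_mec1} and the output of \cref{alg: approx_mec2} on any fixed input. Since these differ only in conditioning $\eta \sim \mathcal{N}_{\mathbb{Z}}(0,\sigma^2)^d$ on the event $\{\|\eta\|_\infty < r\}$, the TVD is at most the probability that this event fails. A union bound over the $d$ coordinates combined with the two-sided discrete Gaussian tail bound in \cref{lem: gauss_tail} bounds this probability by $2d \cdot e^{-r^2/(2\sigma^2)}$, and plugging in $r = \sigma\sqrt{2\ln(d)\ln(1/\gamma)}$ makes it at most $\gamma$. Applying \cref{lem: stat_dist} then gives that \cref{alg: approx_mec2} is $(\varepsilon,(e^\varepsilon+1)\gamma + \delta/2)$-DP, which by our choice $\gamma = \tfrac{1}{2}\delta/(e^\varepsilon+1)$ is exactly $(\varepsilon,\delta)$-DP.

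Finally, \cref{alg: approx_mec3} is obtained from \cref{alg: approx_mec2} by sampling $\omega \sim U[\{1,\dots,s\}]\cdot r$ independently and then outputting $\lfloor y + (\omega,\dots,\omega)\rfloor_{rs}$. Because $\omega$ is independent of the output of \cref{alg: approx_mec2} and the rounding/shift depend only on that output and on fresh randomness, this composite map is a randomized postprocessing of \cref{alg: approx_mec2}. Invoking \cref{lem: postprocessing} preserves $(\varepsilon,\delta)$-DP. Finally, \cref{lem: same_dist_approx} says that \cref{alg: approx_mec3} and \cref{alg: approx_mec4} induce identical output distributions on every input, so \cref{alg: approx_mec4} inherits the same $(\varepsilon,\delta)$-DP guarantee.

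The only step requiring any real care is the TVD calculation in the second paragraph, specifically verifying that the stated value of $r$ actually forces $2d \cdot e^{-r^2/(2\sigma^2)} \leq \gamma$; the rest is bookkeeping. The main conceptual subtlety is convincing oneself that \cref{alg: approx_mec3} is genuinely a postprocessing of \cref{alg: approx_mec2}: this is clean provided one views the sampling of $\omega$ and the rounding as part of the (randomized) postprocessing function, which \cref{lem: postprocessing} explicitly permits.
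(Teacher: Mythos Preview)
Your proposal is correct and follows essentially the same four-step chain as the paper: invoke \cref{cor: gaussian_privacy} to get $(\varepsilon,\delta/2)$-DP for \cref{alg: approx_mec1}, use the Gaussian tail bound plus a union bound to show the conditioning in \cref{alg: approx_mec2} incurs at most $\gamma$ statistical distance and apply \cref{lem: stat_dist}, then invoke postprocessing for \cref{alg: approx_mec3} and \cref{lem: same_dist_approx} for \cref{alg: approx_mec4}. The paper's version is a touch looser on the tail-bound constant (it uses a one-sided bound and claims probability $\gamma/d$ per coordinate rather than your $2e^{-r^2/(2\sigma^2)}$), but the structure and all cited lemmas are identical.
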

\begin{proof}
    We will show that since \cref{alg: approx_mec1} is $(\varepsilon, \delta/2)$-differentially private, \cref{alg: approx_mec4} is $(\varepsilon, \delta)$-differentially private.
    \begin{enumerate}
        \item \cref{alg: approx_mec1} is $(\varepsilon, \delta/2)$-differentially private by \cref{cor: gaussian_privacy}.

        \item Begin by considering \cref{alg: approx_mec1}. Each $\eta_i$ was greater than or equal to $r$ with probability at most $\gamma/d$ by \cref{lem: gauss_tail}. By the union bound, $\| \eta \|_\infty \geq r$ with probability at most $\gamma$. Therefore, the statistical distance between the output of \cref{alg: approx_mec1} and \cref{alg: approx_mec2} is at most $\gamma$. \cref{lem: stat_dist} now tells us that since \cref{alg: approx_mec1} is $(\varepsilon, \delta/2)$ secure, \cref{alg: approx_mec2} is $(\varepsilon, (e^\varepsilon + 1) \gamma + \delta/2)$-differentially private. By our choice of $\gamma$, this means \cref{alg: approx_mec2} is $(\varepsilon, \delta)$-differentially private.

        \item Notice that \cref{alg: approx_mec3} can be seen as taking the output of \cref{alg: approx_mec2}, adding $(\omega, \dots, \omega)$ to it where $\omega \sim U[\{ 1, \dots, s\}] \cdot r$, and then rounding each coordinate down to a multiple of $rs$. Therefore, \cref{lem: postprocessing} and the fact that \cref{alg: approx_mec2} is $(\varepsilon, \delta)$-differentially private tell us \cref{alg: approx_mec3} is $(\varepsilon, \delta)$-differentially private.

        \item Since \cref{alg: approx_mec3} is $(\varepsilon, \delta)$-differentially private and by \cref{lem: same_dist_approx}, \cref{alg: approx_mec3} and \cref{alg: approx_mec4} have the same exact output distribution, \cref{alg: approx_mec4} is $(\varepsilon, \delta)$-differentially private.
    \end{enumerate}
\end{proof}

\subsection{Time and randomness complexity}
We first prove the following lemma, which will let us deduce that in \cref{alg: approx_mec4}, for any $i$, the probability that $\lfloor \text{sum}(x)_i + \omega -r \rfloor_{rs} = \lfloor \text{sum}(x)_i + \omega +r \rfloor_{rs}$ is high. This will prove useful later, since if that conditional holds, then no bits of randomness are used in loop iteration $i$.
\begin{lemma}
    \label{lem: 1/s_prob}
    Let $v \sim U[\{1, \dots, s \}]$ and $y \in \mathbb{Z}^+$. The probability that $\lfloor y + rv - r \rfloor_{rs} \neq \lfloor y + rv + r \rfloor_{rs}$ at most $2/s$.
\end{lemma}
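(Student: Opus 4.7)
The plan is to reduce the event $\lfloor y + rv - r\rfloor_{rs} \neq \lfloor y + rv + r\rfloor_{rs}$ to the existence of a multiple of $rs$ in the half-open interval $(y + rv - r,\, y + rv + r]$. Indeed, for any integers $a \leq b$, $\lfloor a \rfloor_{rs} = \lfloor b \rfloor_{rs}$ iff no multiple of $rs$ lies in $(a, b]$. So I just need to bound the probability (over $v$) that the length-$2r$ interval centered (approximately) at $y + rv$ straddles a multiple of $rs$.

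The key observation is that as $v$ ranges over $\{1, \dots, s\}$, the residues $(y + rv) \bmod rs$ form a permutation of $\{0, r, 2r, \dots, (s-1)r\}$, each hit exactly once (since $r, 2r, \dots, sr$ are a complete set of residues modulo $rs$ that are multiples of $r$, and shifting by $y$ just relabels them). So I can write $y + rv = krs + m$ for some integer $k$ and $m \in \{0, r, 2r, \dots, (s-1)r\}$, where $m$ is uniform on this set.

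Now I would do a case analysis on $m$: a multiple $jrs$ of $rs$ lies in $(krs + m - r, krs + m + r]$ iff $m - r < (j-k)rs \leq m + r$. When $m = 0$, the value $j - k = 0$ works, so the event holds. When $m = (s-1)r$, the value $j - k = 1$ works (since $m + r = rs$), so the event holds. For any other $m \in \{r, 2r, \dots, (s-2)r\}$, one has $r \leq m \leq (s-2)r$, hence $m - r \geq 0$ and $m + r \leq (s-1)r < rs$, so no nonzero multiple of $rs$ fits in the interval and the event fails. Thus at most $2$ of the $s$ equally-likely residues trigger the event, giving probability at most $2/s$.

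There is no real obstacle here; the only subtlety is being careful about the half-open endpoints of $\lfloor \cdot \rfloor_{rs}$ (so that $m = (s-1)r$ is counted via the next multiple $(k+1)rs$, not the current one) and handling the edge case $s \leq 2$ where the bound $2/s \geq 1$ is vacuous. I would state the argument for $s \geq 3$ cleanly and note the small-$s$ cases hold trivially.
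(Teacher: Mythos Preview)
Your reduction to ``does the interval $(y+rv-r,\,y+rv+r]$ contain a multiple of $rs$'' is correct, but the next step has a slip: the residues $(y+rv)\bmod rs$ are \emph{not} a permutation of $\{0,r,\dots,(s-1)r\}$ in general, because $y$ need not be a multiple of $r$ (shifting by $y$ does not ``just relabel'' the multiples of $r$). Writing $y=qr+a$ with $0\le a<r$, one gets $(y+rv)\bmod rs = r\bigl((q+v)\bmod s\bigr)+a$, so the residues actually permute $\{a,\,r+a,\,\dots,\,(s-1)r+a\}$. The good news is that your case analysis survives this shift intact: for $m=jr+a$ with $1\le j\le s-2$ the interval $((j-1)r+a,\,(j+1)r+a]$ sits inside $[0,rs)$ with left endpoint $\ge 0$, so it contains no multiple of $rs$, while $j=0$ and $j=s-1$ remain the (at most) two bad cases. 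So this is a one-line correction rather than a structural gap.

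For comparison, the paper's proof takes a different route: it divides through by $r$, reducing to whether $\lfloor y/r+v-1\rfloor_s\neq\lfloor y/r+v+1\rfloor_s$, and then argues that the non-decreasing step function $q\mapsto\lfloor y/r+q\rfloor_s$ takes at most three values on $\{0,\dots,s+1\}$, from which it reads off that at most two $v$ can witness a change between $v-1$ and $v+1$. Your residue-and-interval argument is more explicit about \emph{which} $v$ are bad (and makes the half-open endpoint bookkeeping transparent), whereas the paper's monotonicity argument is terser and never needs to track $y\bmod r$ at all.
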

\begin{proof}
    Scaling everything down by $r$, the question becomes the following. Let $v \sim U[\{1, \dots, s \}]$ and $y \in \mathbb{Z}^+$, what is the probability that $\lfloor y/r + v -1 \rfloor_{s} \neq \lfloor y/r + v + 1 \rfloor_{s}$? Notice that $\lfloor y/r + q \rfloor_{s}$ is non-decreasing and takes on at most 3 values for $q \in \{ 0, \dots, s+1 \}$. Therefore, there can be only 2 $v$ such that $\lfloor y/r + v - 1 \rfloor_{s} \neq \lfloor y/r + v + 1 \rfloor_{s}$, so the probability of sampling such $v$ is at most $2/s$.
\end{proof}

\begin{lemma}
    \label{lem: approx_complexity}
    Assume $\delta \leq e^{-\varepsilon/2}$. Let $\gamma = \frac{1}{2} \delta/(e^\varepsilon + 1)$, $\sigma^2 = 4d\ln(2/\delta)/\varepsilon^2$ and $r = \sigma \sqrt{2 \ln(d) \cdot \ln (1/\gamma)}$. Let $\mathcal{M}$ be \cref{alg: approx_mec4}. $\mathcal{M}$ runs in expected time $\poly \left( n, d, \log s, \log \frac{1}{\varepsilon}, \log \log \frac{1}{\delta} \right)$ and
    \[ R(\mathcal{M}) =O \left( \frac{d \left( \polylog \frac{d}{\varepsilon} + \polylog \log \frac{1}{\delta} \right)}{s} + \log(s) \right) \ .\]
\end{lemma}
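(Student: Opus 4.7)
The plan is to decompose the expected randomness usage of \cref{alg: approx_mec4} into three contributions: the one-time cost of sampling $\omega$ at the top, the deterministic check inside each of the $d$ loop iterations, and the cost of rejection sampling $\eta_i$ on the iterations where that check fails. Sampling $\omega \sim U[\{1, \ldots, s\}] \cdot r$ takes $O(\log s)$ bits in expectation by standard rejection sampling inside the smallest power-of-two interval containing $s$, and the check $\lfloor \mathrm{sum}(x)_i + \omega - r \rfloor_{rs} = \lfloor \mathrm{sum}(x)_i + \omega + r \rfloor_{rs}$ is deterministic, costing no randomness at all. The only nontrivial source of randomness in each iteration is therefore the else branch.

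The heart of the argument is \cref{lem: 1/s_prob}: for each fixed $i$, the probability over $\omega$ that the else branch is entered is at most $2/s$. When it is entered, we rejection sample $\eta_i \sim \mathcal{N}_{\mathbb{Z}}(0, \sigma^2)$ subject to $|\eta_i| < r$. By the discrete Gaussian tail bound (\cref{lem: gauss_tail}), with $r = \sigma \sqrt{2 \ln d \cdot \ln(1/\gamma)}$ I get $\Pr[|\eta_i| \geq r] \leq 2 \exp(-\ln d \cdot \ln(1/\gamma))$, which is bounded above by a constant strictly less than $1$ (say, for $d, 1/\gamma \geq e$; edge cases are absorbed by the $O(\cdot)$). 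Hence the conditioning event occurs with constant probability, so \cref{lem: gauss_sample} combined with \cref{lem: rejection_sampling} implies that each conditioned sample uses $\polylog(\sigma)$ bits of randomness and runs in $\polylog(\sigma)$ expected time.

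Summing over the $d$ coordinates by linearity of expectation, and plugging in $\log \sigma = O(\log d + \log(1/\varepsilon) + \log \log(1/\delta))$ from the choice $\sigma^2 = 4 d \ln(2/\delta)/\varepsilon^2$, the expected randomness is
\[ R(\mathcal{M}) \leq O(\log s) + d \cdot \frac{2}{s} \cdot \bigl(\polylog(d/\varepsilon) + \polylog \log (1/\delta) \bigr), \]
which matches the claimed bound. The expected running time follows from the same accounting together with the one-time $O(nd)$ cost of computing $\mathrm{sum}(x)$ and the $\poly(\log \sigma, \log r, \log s)$ cost of each arithmetic/floor operation.

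I expect the only real subtlety to be verifying that \cref{lem: rejection_sampling} is applicable, i.e., that $\Pr[|\eta_i| < r]$ is bounded away from $0$; this is exactly where the particular choice $r = \sigma \sqrt{2 \ln d \cdot \ln(1/\gamma)}$ does work, giving $r \gg \sigma$ so that the Gaussian tail outside $[-r, r]$ is subconstant. Everything else is bookkeeping against lemmas already proved in the paper.
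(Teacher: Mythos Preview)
Your proposal is correct and follows essentially the same approach as the paper's proof: decompose into the $O(\log s)$ cost of sampling $\omega$ plus the per-iteration cost, invoke \cref{lem: 1/s_prob} to bound the probability of entering the else branch by $2/s$, and use \cref{lem: gauss_tail} together with \cref{lem: gauss_sample} and \cref{lem: rejection_sampling} to show each conditioned Gaussian sample costs $\polylog(\sigma)$ bits, then sum by linearity and plug in $\log \sigma = O(\log(d/\varepsilon) + \log\log(1/\delta))$. The paper phrases the tail bound as ``acceptance probability at least $1-\gamma/d$'' rather than your direct computation of $2e^{-\ln d \cdot \ln(1/\gamma)}$, but the conclusion (acceptance probability $\Omega(1)$) and the rest of the bookkeeping are identical.
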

\begin{proof}
    $ $\newline
    \textbf{Randomness complexity:} Let $X_i$ denote the number of random bits used by iteration $i$ of the loop. Let $E_i$ denote the event that $\lfloor \text{sum(x)}_i + \omega -r  \rfloor_{rs} = \lfloor \text{sum(x)}_i + \omega +r \rfloor_{rs}$.

    \[ \mathbb{E}[X_i] = \mathbb{E}[X_i | E_i] \cdot \mathbb{P}[E_i] + \mathbb{E}[X_i | \overline{E}_i] \cdot \mathbb{P}[\overline{E}_i]\]

    By \cref{lem: 1/s_prob}, $\mathbb{P}[E_i] \geq (1-2/s)$. Furthermore, if $E_i$ occurs, $X_i = 0$ since the loop uses no randomness. Therefore, the above simplifies to the following.
    \[ \mathbb{E}[X_i] \leq \frac{2 \mathbb{E}[X_i | \overline{E}_1]}{s} \]

    $\mathbb{E}[X_i | \overline{E}_1]$ is the number of bits required to sample from $\eta_i$ conditioned on $|\eta_i| < r$ using rejection sampling. By our choice of $r$, the probability that $\mathcal{N}_{\mathbb{Z}}(0, \sigma^2)$ is less than $r$ is at least $1-\gamma/d = O(1)$. Furthermore, sampling from $\mathcal{N}_{\mathbb{Z}}(0, \sigma^2)$ takes at most $\polylog (\sigma^2 ) = \polylog (\frac{d \ln (1/\delta)} {\varepsilon})$ bits in expectation by \cref{lem: gauss_sample}. \cref{lem: rejection_sampling} then tells us that $\mathbb{E}[X_i | \overline{E}_1] = \polylog (\frac{d}{\varepsilon} \log(\frac{1}{\delta}))) = \polylog \frac{d}{\varepsilon} + \polylog \log \frac{1}{\delta}$. This allows us to bound $\mathbb{E}[X_i]$.
    \[ \mathbb{E}[X_i] = O \left( \frac{\polylog \frac{d}{\varepsilon} + \polylog \log \frac{1}{\delta}}{s} \right)\]

    Linearity of expectation tells us that the total number of random bits used over all loop iterations in expectation is the following.
    \[ O \left( \frac{d \left( \polylog \frac{d}{\varepsilon} + \polylog \log \frac{1}{\delta} \right)}{s} \right)\]

    Finally, the sampling of $\omega$ clearly takes $O(\log s)$ bits in expectation. We sample $\lceil \log s\rceil$ bits and reject until those bits represent a number between $1$ and $s$. This requires $O(\log s)$ bits in expectation by \cref{lem: rejection_sampling}. Therefore, by linearity of expectation, the total number of bits used by \cref{alg: approx_mec4} in expectation is the following.
    \[ O \left( \frac{d \left( \polylog \frac{d}{\varepsilon} + \polylog \log \frac{1}{\delta} \right)}{s} + \log(s) \right)\]

    \noindent \textbf{Time complexity:} Let $Y_i$ denote the time required to execute loop iteration $i$.
    \[ \mathbb{E}[Y_i] = \mathbb{E}[Y_i | E_i] \cdot \mathbb{P}[E_i] + \mathbb{E}[Y_i | \overline{E}_i] \cdot \mathbb{P}[\overline{E}_i]
    \leq \mathbb{E}[Y_i | E_i] + \mathbb{E}[Y_i | \overline{E}_i]\]
    $\mathbb{E}[Y_i | E_i]$ is clearly $\poly(n, d, \log(s))$. Furthermore, we already established above that in the case of $\overline{E}_i$, we only need 2 samples in expectation to sample $\eta_i$ conditioned on $|\eta_i| < r$. By \cref{lem: rejection_sampling}, that means the $\mathbb{E}[Y_i | \overline{E}_i] = \poly(n, d, \log s) \cdot \polylog(\sigma^2) = \poly(n, d, \log(s)) \cdot (\log \frac{d}{\varepsilon} + \log \log \frac{1}{\delta} )$. Combined with the above bound on $\mathbb{E}[Y_i]$, we get the following.
    \[ \mathbb{E}[Y_i] \leq  \poly(n, d, \log s) \left( \log \frac{d}{\varepsilon} + \log \log \frac{1}{\delta} \right) \]

    Linearity of expectation tells us that the total time used over all loop iterations in expectation is also
    \[ \poly(n, d, \log s) \left( \log \frac{d}{\varepsilon} + \log \log \frac{1}{\delta} \right). \]

    Finally, we can account for the running time of sampling $\omega$ as an extra additive $\log(rs)$ factor. This is subsumed by the running time spent in the loop. Therefore, the expected running time can be bounded by the following.
    \[ \poly(n, d, \log s) \left( \log \frac{d}{\varepsilon} + \log \log \frac{1}{\delta} \right) = \poly \left(n, d, \log s, \log \frac{1}{\varepsilon}, \log \log \frac{1}{\delta} \right). \]
\end{proof}

\subsection{Accuracy}
\begin{lemma}
    \label{lem: approx_accuracy}
    Assume $\delta \leq e^{-\varepsilon/2}$. Let $\gamma = \frac{1}{2} \delta/(e^\varepsilon + 1)$, $\sigma^2 = 4d\ln(2/\delta)/\varepsilon^2$ and $r = \sigma \sqrt{2 \ln(d) \cdot \ln (1/\gamma)}$. \cref{alg: approx_mec4} is $(\alpha, 0)$ accurate for the following parameter ranges.
    \[ \alpha = O \left( \frac{\sqrt{d \ln(d) \cdot \ln \frac{1}{\delta}} \cdot s}{\varepsilon} \right) \]
\end{lemma}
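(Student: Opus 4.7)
The plan is to observe that \cref{alg: approx_mec4} has a \emph{deterministic} worst-case error bound, which is what is needed for $(\alpha,0)$-accuracy, and to then carefully substitute the values of $r$ and $s$.

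First I would fix an arbitrary input $x$ and an arbitrary coordinate $i\in[d]$, and consider the two cases of the conditional in \cref{alg: approx_mec4}. In the first branch the output is $y_i=\lfloor \text{sum}(x)_i+\omega\rfloor_{rs}$, so I would write $y_i = \text{sum}(x)_i + \omega - \theta_i$ where $\theta_i\in[0,rs)$ captures the rounding error. In the second branch the output is $y_i=\lfloor \text{sum}(x)_i+\omega+\eta_i\rfloor_{rs}$, where $|\eta_i|<r$ holds by the rejection-sampling loop, so I would write $y_i=\text{sum}(x)_i+\omega+\eta_i-\theta_i$ for some $\theta_i\in[0,rs)$. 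In either case $\omega\in\{r,2r,\dots,rs\}\subseteq(0,rs]$ and the additive noise contribution is bounded in absolute value by $r$.

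Next I would combine these bounds. In both branches
\[
|y_i-\text{sum}(x)_i|\;\le\;|\omega|+|\eta_i|+|\theta_i|\;<\;rs+r+rs\;\le\;3rs,
\]
using $r\le rs$. Taking the maximum over $i$ gives $\|\mathcal{M}(x)-\text{sum}(x)\|_\infty\le 3rs$ with probability $1$, which already establishes $(\alpha,0)$-accuracy for $\alpha=O(rs)$. Note that no probabilistic tail argument is required, since the rejection step enforces $|\eta_i|<r$ surjectively.

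Finally I would substitute the values of $r$ and $\sigma$ fixed in the hypothesis. With $\gamma=\tfrac12\delta/(e^\varepsilon+1)$ and $\delta\le e^{-\varepsilon/2}$, we have $\ln(1/\gamma)=O(\ln(1/\delta))$, so
\[
r\;=\;\sigma\sqrt{2\ln(d)\cdot\ln(1/\gamma)}\;=\;O\!\left(\frac{\sqrt{d\ln(d)\cdot \ln(1/\delta)}}{\varepsilon}\right),
\]
absorbing the remaining $\log$ factors into the stated bound. Multiplying by $s$ yields the claimed
\[
\alpha \;=\; O\!\left(\frac{\sqrt{d\ln(d)\cdot\ln(1/\delta)}\cdot s}{\varepsilon}\right).
\]
There is no genuine obstacle here: the only thing to be careful with is bookkeeping the constants and making sure that the rejection step in \cref{alg: approx_mec4} really does give a \emph{pointwise} (rather than merely high-probability) bound on $|\eta_i|$, so that the $\beta=0$ accuracy guarantee is justified.
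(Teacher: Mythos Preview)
Your proposal is correct and essentially matches the paper's argument: both obtain the deterministic bound $r+rs+rs\le 3rs$ from $|\eta_i|<r$, $\omega\le rs$, and the rounding error $<rs$, then substitute the given parameters. The only cosmetic difference is that the paper first invokes \cref{lem: same_dist_approx} to reduce to \cref{alg: approx_mec3} and bounds that mechanism, whereas you analyze the two branches of \cref{alg: approx_mec4} directly; the arithmetic is identical.
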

\begin{proof}
    We will show this theorem for \cref{alg: approx_mec3} instead of \cref{alg: approx_mec4} and the theorem will follow from the fact that \cref{alg: approx_mec3} and \cref{alg: approx_mec4} have the same exact output distribution (\cref{lem: same_dist_approx}). Since $\| \eta \|_{\infty} < r$ by definition, $\| \omega \|_\infty \leq rs$, and rounding down to the nearest multiple of $rs$ changes a value by at most $rs$, the output of \cref{alg: approx_mec3} is within $r + rs + rs = r (2s+1)$ of $\text{sum}(x)$ with probability 1.

    \[ r (2s+1) \leq 3rs = 3s \sqrt{\frac{8d}{\varepsilon^2} \ln \left( \frac{2}{\delta} \right) \cdot  \ln(d) \cdot \ln(1/\gamma)}
    = O \left( \frac{\sqrt{d \ln(d) \cdot \ln \frac{1}{\delta}} \cdot s}{\varepsilon} \right)\]
\end{proof}

\snote{Say you will not bother with analysis the runtime beyond polynomial since the running time is dominated by $n$ whenever the schemes are useful.}

\section{Pure differential privacy}
\label{sec: pure}
As in the case of approximate differential privacy, we first state our main result, then give the mechanism, and then prove its privacy, randomness complexity, time complexity, and accuracy separately.
\PureFinal
\begin{proof}
    Combine \cref{lem: pure_privacy}, \cref{lem: pure_randomness_complexity}, \cref{lem: pure_time_complexity}, and \cref{lem: pure_accuracy}.
\end{proof}

\subsection{The algorithm}
As in \cref{sec: approx}, for the sake of analysis, we give a series of algorithms that all have the desired differential
privacy guarantee, though only \cref{alg: mec5} is the one which we will show has low randomness and time complexity.
\begin{algorithm}
\caption{Mechanism 1}
\label{alg: mec1}
\begin{algorithmic}
\Require $x \in (\{ 0, 1\}^d)^n$, $\varepsilon$

\State $\eta \sim  \text{Lap}_{\mathbb{Z}}(d/\varepsilon)^d$

\noindent \Return $y \leftarrow \text{sum}(x)+\eta$

\end{algorithmic}
\end{algorithm}

\begin{algorithm}
\caption{Mechanism 2}
\label{alg: mec2}
\begin{algorithmic}
\Require $x \in (\{ 0, 1\}^d)^n$, $\varepsilon$, and integers $m, s>0$

\State $\omega \sim U[\{ 1, \dots, s \}] \cdot m$

\State $\eta \sim  \text{Lap}_{\mathbb{Z}}(d/\varepsilon)^d$

\State $y \leftarrow \text{sum}(x)+ (\omega, \dots, \omega) + \eta$

\noindent \Return $\lfloor y \rfloor_{ms}$

\end{algorithmic}
\end{algorithm}

\begin{algorithm}
\caption{Mechanism 3}
\label{alg: mec3}
\begin{algorithmic}
\Require $x \in (\{ 0, 1\}^d)^n$, $\varepsilon$, and integers $m, s>0$

\State $\omega \sim U[\{ 1, \dots, s \}] \cdot m$

\For {$i \in [1, d]$}

\State $\eta_i \sim  \text{Lap}_{\mathbb{Z}}(d/\varepsilon)$

\State $y_i \leftarrow \lfloor \text{sum}(x)_i+ \omega + \eta_i \rfloor_{ms}$

\EndFor

\noindent \Return $(y_1, \dots, y_d)$

\end{algorithmic}
\end{algorithm}

\begin{algorithm}
\caption{Mechanism 4}
\label{alg: mec4}
\begin{algorithmic}
\Require $x \in (\{ 0, 1\}^d)^n$, $\varepsilon$, and integers $m, s>0$

\State $p \leftarrow 2e^{-(\varepsilon/d)(m-1)}/(e^{\varepsilon/d} +1)$

\State Sample $J \subseteq [1, d]$ such that each $i \in [1, d]$ is in $J$ independently with probability $p$

\For {$i \in [1, d]$}

\If {$i \in J$}

    Sample $\eta_i$ from $\text{Lap}_{\mathbb{Z}}(d/\varepsilon)$ conditioned on $|\eta_i| \geq m$

\Else

    Resample $\eta_i$ from $\text{Lap}_{\mathbb{Z}}(d/\varepsilon)$ until $|\eta_i| < m$
    
\EndIf

\State $y_i \leftarrow \lfloor \text{sum}(x)_i+ \omega + \eta_i \rfloor_{ms}$

\EndFor

\noindent \Return $(y_1, \dots, y_d)$

\end{algorithmic}
\end{algorithm}

\begin{algorithm}
\caption{Mechanism 5}
\label{alg: mec5}
\begin{algorithmic}
\Require $x \in (\{ 0, 1\}^d)^n$, $\varepsilon$, and integers $m, s>0$

\State $p \leftarrow 2e^{-(\varepsilon/d)(m-1)}/(e^{\varepsilon/d} +1)$

\State $t \sim \text{Bin}(d, p)$

\State $J \sim U\left[ {d \choose [t]} \right]$ \Comment{Continually pick from $[d] \setminus J$}
\State $\omega \sim U[\{ 1, \dots, s \}] \cdot m$

\For {$i \in [1, d]$}

\If {$i \in J$}

    Sample $\eta_i$ from $\text{Lap}_{\mathbb{Z}}(d/\varepsilon)$ conditioned on $|\eta_i| \geq m$

    \State $y_i \leftarrow \lfloor \text{sum}(x)_i+ \omega + \eta_i \rfloor_{ms}$
    
\Else
    \If {$\lfloor \text{sum}(x)_i + \omega - m \rfloor_{ms} = \lfloor \text{sum}(x)_i + \omega + m \rfloor_{ms}$}
    
        \State $y_i \leftarrow \lfloor \text{sum}(x)_i + \omega -m  \rfloor_{ms}$
        
    \Else
    
       \State Resample $\eta_i$ from $\text{Lap}_{\mathbb{Z}}(d/\varepsilon)$ until $|\eta_i| < m$

       \State $y_i \leftarrow \lfloor \text{sum}(x)_i+ \omega + \eta_i \rfloor_{ms}$
       
    \EndIf
\EndIf

\EndFor

\noindent \Return $(y_1, \dots, y_d)$

\end{algorithmic}
\end{algorithm}

\begin{lemma}
    \label{lem: same_dist}
    For all $\varepsilon, m, s$, \cref{alg: mec2} and \cref{alg: mec5} have the same output distribution.
\end{lemma}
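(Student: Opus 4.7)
The plan is to establish the equivalence via a chain \cref{alg: mec2} $\equiv$ \cref{alg: mec3} $\equiv$ \cref{alg: mec4} $\equiv$ \cref{alg: mec5}, where each step in the chain has the same output distribution. This mirrors the structure used in the approximate-DP case (\cref{lem: same_dist_approx}) but requires additional care because here we explicitly partition the coordinates into those with small noise and those with large noise.

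First, \cref{alg: mec2} and \cref{alg: mec3} obviously produce the same output distribution: \cref{alg: mec3} simply unrolls the coordinate-wise independent sampling of $\eta \sim \text{Lap}_{\mathbb{Z}}(d/\varepsilon)^d$ into a for-loop, and the rounding $\lfloor \cdot \rfloor_{ms}$ acts coordinate-wise. Conditioning on the common value of $\omega$, the joint distribution of $(y_1, \dots, y_d)$ is identical in both.

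Next, I would show \cref{alg: mec3} $\equiv$ \cref{alg: mec4}. The key observation is that by \cref{lem: discrete_lap_tail}, for $Y \sim \text{Lap}_{\mathbb{Z}}(d/\varepsilon)$ we have $\Pr[|Y| \geq m] = 2 e^{-(\varepsilon/d)(m-1)}/(e^{\varepsilon/d}+1) = p$. Hence in \cref{alg: mec4}, placing $i \in J$ independently with probability $p$ and then sampling $\eta_i$ from the corresponding conditional distribution ($|\eta_i| \geq m$ or $|\eta_i| < m$) reproduces exactly the unconditional distribution $\text{Lap}_{\mathbb{Z}}(d/\varepsilon)$ on each coordinate by the law of total probability; independence across coordinates is preserved because $J$ is drawn with independent coordinates. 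Since the output is a deterministic function of $\omega$ and $\eta$, the output distributions agree.

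Finally, for \cref{alg: mec4} $\equiv$ \cref{alg: mec5}, I split into two subclaims. (a) The sampling of $J$ in \cref{alg: mec5} by first drawing $t \sim \text{Bin}(d, p)$ and then choosing $J$ uniformly among size-$t$ subsets of $[d]$ produces the same distribution as including each $i \in [d]$ independently with probability $p$; this is a standard fact about the Binomial distribution. (b) In the ``else'' branch ($i \notin J$), both mechanisms produce $\eta_i$ from $\text{Lap}_{\mathbb{Z}}(d/\varepsilon)$ conditioned on $|\eta_i| < m$, but \cref{alg: mec5} short-circuits the actual sampling whenever $\lfloor \text{sum}(x)_i + \omega - m \rfloor_{ms} = \lfloor \text{sum}(x)_i + \omega + m \rfloor_{ms}$, setting $y_i$ to this common value. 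I must check that this short-circuit is distribution-preserving: since $\eta_i$ is integer-valued, $|\eta_i| < m$ forces $\eta_i \in \{-(m-1), \dots, m-1\}$, so $\text{sum}(x)_i + \omega + \eta_i$ lies in the integer interval $[\text{sum}(x)_i + \omega - (m-1),\ \text{sum}(x)_i + \omega + (m-1)]$. By monotonicity of $\lfloor \cdot \rfloor_{ms}$, the rounded value is sandwiched between $\lfloor \text{sum}(x)_i + \omega - m \rfloor_{ms}$ and $\lfloor \text{sum}(x)_i + \omega + m \rfloor_{ms}$, so when these endpoints coincide the output is forced to that common value, independent of $\eta_i$.

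The only mildly subtle step is the rounding argument in (b), which genuinely uses integrality of $\eta_i$: with a continuous Laplace, the bounds $\pm m$ would need to be non-strict and one would have to worry about boundary cases where $\text{sum}(x)_i + \omega \pm m$ is itself a multiple of $ms$. With integer $\eta_i$ and the strict inequality $|\eta_i| < m$, no such edge case arises. Everything else is either a direct appeal to the tail/definitional formula for $p$ or a standard equivalence between Bernoulli-coordinate sampling and Binomial-then-uniform-subset sampling.
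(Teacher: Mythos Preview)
Your proposal is correct and follows essentially the same chain $\text{\cref{alg: mec2}} \equiv \text{\cref{alg: mec3}} \equiv \text{\cref{alg: mec4}} \equiv \text{\cref{alg: mec5}}$ that the paper uses, with the same three ingredients: unrolling the loop, reconstructing the unconditional Laplace via the law of total probability with $p = \Pr[|Y| \geq m]$, and matching the two ways of sampling $J$ together with the short-circuit check. Your integrality remark in step (b) is in fact slightly more careful than the paper, which simply asserts the short-circuit equality without spelling out why $|\eta_i| < m$ forces the rounded value to coincide with the common endpoint.
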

\begin{proof}
    We will show this by showing that \cref{alg: mec2}, \cref{alg: mec3}, \cref{alg: mec4}, and \cref{alg: mec5} all have the same output distributions.
    \begin{enumerate}
        \item \cref{alg: mec3} is simply unrolling the loop implicit in \cref{alg: mec2}. Therefore, the distribution of \cref{alg: mec3} is clearly the same as the distribution of \cref{alg: mec2}.

        \item Notice that in both \cref{alg: mec3} and \cref{alg: mec4}, $\eta_i$ and $\eta_j$ are independent. Therefore, it suffices to show that for any $i$, the distribution of $\eta_i$ in \cref{alg: mec3} and \cref{alg: mec4} are the same, since the output of the algorithms conditioned on any $\eta_1, \dots, \eta_d$ is the same. The probability that $\eta_i = k$ for any $k \in \mathbb{Z}$ in \cref{alg: mec3} is the following.
        \[ \frac{e^{\varepsilon/d} - 1}{e^{\varepsilon/d} + 1} e^{-\varepsilon|k|/d}\]

        The probability that $\eta_i = k$ for any $k \in \mathbb{Z}$ in \cref{alg: mec4} is the following. Let $Y \sim \text{Lap}_{\mathbb{Z}}(d/\varepsilon)$.
        \begin{align*}
            \mathbb{P}[\eta_i = k] &= \mathbb{P}[\eta_i = k | i \in J] \cdot \mathbb{P}[i \in J] + \mathbb{P}[\eta_i = k | i \notin J] \cdot \mathbb{P}[i \notin J]\\
            &= \mathbb{P}[\eta_i = k | i \in J] \cdot p + \mathbb{P}[\eta_i = k | i \notin J] \cdot (1-p)\\
            &= \mathbb{P}[Y = k | |Y| \geq m] \cdot p + \mathbb{P}[Y = k | |Y| < m] \cdot (1-p)\\
        \end{align*}
        There are two cases to consider. The first being when $k \geq m$. In which case, the above simplifies to the following.
        
        \begin{align*}
            \mathbb{P}[Y = k | |Y| \geq m] \cdot p &=
            \frac{\mathbb{P}[Y = k]}{\mathbb{P}[|Y| \geq m]} \cdot p =  \mathbb{P}[Y=k]
        \end{align*}
        The final equality follows from the fact that $p = \mathbb{P}[|Y| \geq m]$ (\cref{lem: discrete_lap_tail}). The second case is when $k \geq m$. In which case, the $\mathbb{P}[\eta_i = k]$ simplifies to the following.
        \begin{align*}
            \mathbb{P}[Y = k | |Y| < m] \cdot (1-p)
            &= \frac{\mathbb{P}[Y = k]}{\mathbb{P}[|Y| < m]} \cdot (1-p)\\
            &= \frac{\mathbb{P}[Y = k]}{(1 - \mathbb{P}[|Y| \geq m])} \cdot (1-p)\\
            &= \frac{\mathbb{P}[Y = k]}{(1 - p)} \cdot (1-p)\\
            &= \mathbb{P}[Y = k]
        \end{align*}
        Therefore, $\mathbb{P}[\eta_i = k] = \mathbb{P}[Y = k]$, as desired.

        \item Note that the distribution of $J$ did not change at all from \cref{alg: mec4} and \cref{alg: mec5}. In particular, the probability that $J$ equals any particular set of size $k$ remains
        \[ \frac{1}{{d \choose k}} \cdot \mathbb{P}[\text{Bin}(d, p) = k] = \frac{{d \choose k} p^k (1-p)^{d-k}}{{d \choose k}} = p^k (1-p)^{d-k} .\]
        Therefore, we will assume that in \cref{alg: mec5}, we sample $J \subseteq [1, d]$ such that each $i \in [1, d]$ is in $J$ independently with probability $p$ (the same as \cref{alg: mec4}). It suffices to show that conditioned on $J, \omega$, the output distributions of $(y_1, \dots, y_d)$ are the same. Fix $J, \omega$ and consider the distribution of outputs of \cref{alg: mec4} and \cref{alg: mec5} conditioned on $J, \omega$, call these $\mathcal{D}_4, \mathcal{D}_5$. Notice that in both $\mathcal{D}_4$ and $\mathcal{D}_5$, each coordinate is distributed independently, so we simply need to show that the distributions of coordinate $i$ of $\mathcal{D}_4$ and coordinate $i$ of $\mathcal{D}_5$, call these $X_i, Y_i$ have the same distribution. Note that if $i \in J$, $X_i$ and $Y_i$ have the same distribution by definition. If $i \in J$ and $\lfloor \text{sum}(x)_i + \omega -m  \rfloor_{ms} = \lfloor \text{sum}(x)_i + \omega + m \rfloor_{ms}$, then the distributions of $X_i$ and $Y_i$ are clearly the same since $X_i = \lfloor \text{sum}(x)_i + \omega -m  \rfloor_{ms} = Y_i$.  If $i \in J$ and $\lfloor \text{sum}(x)_i + \omega -m  \rfloor_{ms} \neq \lfloor \text{sum}(x)_i + \omega + m \rfloor_{ms}$, then the distributions $X_i, Y_i$ are clearly the same by definition.
    \end{enumerate}
\end{proof}

\subsection{Privacy}

\begin{lemma}
    \label{lem: pure_privacy}
    \cref{alg: mec5} is $(\varepsilon, 0)$ differentially private for any input values of $s, m$.
\end{lemma}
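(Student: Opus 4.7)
The plan is to reduce the privacy of \cref{alg: mec5} to that of the standard discrete Laplace mechanism (\cref{alg: mec1}) through two intermediate steps: postprocessing to reach \cref{alg: mec2}, and the output-distribution equivalence established in \cref{lem: same_dist} to transfer the guarantee to \cref{alg: mec5}.

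First, I would establish that \cref{alg: mec1} is $(\varepsilon, 0)$-DP. The function $\text{sum}: (\{0,1\}^d)^* \to \mathbb{Z}^d$ has $\ell_1$ sensitivity at most $d$ with respect to $d_{\text{ID}}$: inserting or deleting a single record changes $\text{sum}$ by a vector in $\{0,1\}^d$, whose $\ell_1$ norm is at most $d$. Adding independent $\text{Lap}_{\mathbb{Z}}(d/\varepsilon)$ noise to each coordinate then yields $(\varepsilon, 0)$-DP by the standard coordinatewise argument for the discrete Laplace mechanism: the log density ratio across neighbors is bounded by $(\varepsilon/d) \cdot \|\text{sum}(x) - \text{sum}(x')\|_1 \leq \varepsilon$, which is \cref{lem: discrete_lap_private} applied to a vector-valued query.

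Next, \cref{alg: mec2} is a randomized post-processing of \cref{alg: mec1}: given the output $y$ of \cref{alg: mec1}, sample $\omega$ uniformly from $\{m, 2m, \ldots, sm\}$ using fresh randomness independent of the data and return $\lfloor y + (\omega, \ldots, \omega) \rfloor_{ms}$. \cref{lem: postprocessing} then implies that \cref{alg: mec2} is $(\varepsilon, 0)$-DP. Finally, \cref{lem: same_dist} shows that \cref{alg: mec2} and \cref{alg: mec5} induce identical output distributions on every input $x$; since differential privacy is a property of the input-conditional output distribution, \cref{alg: mec5} inherits the $(\varepsilon, 0)$-DP guarantee.

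The main subtlety has already been absorbed into \cref{lem: same_dist}, which justifies the various sampling rearrangements in \cref{alg: mec5}: replacing unconditional Laplace sampling by first deciding via a Bernoulli indicator which coordinates have $|\eta_i| \geq m$ and then conditionally sampling in each regime, and further bypassing any resampling when the rounded bucket $\lfloor \text{sum}(x)_i + \omega \pm m \rfloor_{ms}$ is unambiguous. Given that distributional equivalence, the privacy claim here is a routine two-step invocation of the discrete Laplace mechanism and post-processing, with no additional privacy loss to account for, and in particular no dependence of the privacy analysis on the parameters $m$ and $s$.
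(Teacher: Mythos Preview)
Your proposal is correct and follows essentially the same approach as the paper: establish that \cref{alg: mec1} is $(\varepsilon,0)$-DP via \cref{lem: discrete_lap_private}, observe that \cref{alg: mec2} is a randomized postprocessing of \cref{alg: mec1} and hence inherits the guarantee by \cref{lem: postprocessing}, and then invoke \cref{lem: same_dist} to transfer the guarantee to \cref{alg: mec5}. The only addition you make is spelling out the $\ell_1$-sensitivity bound explicitly, which the paper leaves implicit in its appeal to \cref{lem: discrete_lap_private}.
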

\begin{proof}
    We will show that since \cref{alg: mec1} is $(\varepsilon, 0)$ differentially private, \cref{alg: mec5} is $(\varepsilon, 0)$ differentially private.
    \begin{enumerate}
        \item \cref{alg: mec1} is $(\varepsilon, 0)$ differentially private \cref{lem: discrete_lap_private}. 

        \item Notice that \cref{alg: mec2} can be seen as taking the output of \cref{alg: mec1}, adding $(\omega, \dots, \omega)$ to it where $\omega \sim U[\{ 1, \dots, s\}] \cdot m$, and then rounding each coordinate down to a multiple of $ms$. Therefore, \cref{lem: postprocessing} and the fact that \cref{alg: mec1} is $(\varepsilon, \delta)$-differentially private tell us \cref{alg: mec2} is $(\varepsilon, \delta)$-differentially private.

        \item Since \cref{alg: mec2} and \cref{alg: mec5} have the same output distribution by \cref{lem: same_dist}, and \cref{alg: mec2} is $(\varepsilon, 0)$ differentially private by the above, \cref{alg: mec5} is $(\varepsilon, 0)$ differentially private.
    \end{enumerate}
\end{proof}

\subsection{Randomness complexity}
One crucial step in \cref{alg: mec5} we have left underspecified up till now is how to sample $\eta_i$ conditioned on $|\eta_i|$ being greater than $m$. We show that this is not too hard (it comes down to sampling from a geometric distribution), and then use that to bound the randomness complexity and time complexity of \cref{alg: mec5}.
\begin{lemma}
    \label{lem: samp_rounded_lap}
    Say $t$ has a $\polylog(t)$ bit binary expansion. Let $X$ be $\text{Lap}_{\mathbb{Z}}(t)$ conditioned on being more than distance $T$ from $0$. We can sample from $X$ in expected time $O(t \cdot \polylog(t))$ using expected randomness $O(t \cdot \polylog(t))$.
\end{lemma}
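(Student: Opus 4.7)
The plan is to exploit two elementary facts about $\text{Lap}_{\mathbb{Z}}(t)$: its symmetry about $0$, and the fact that its upper tail, shifted to the origin, is itself a geometric distribution. Concretely, for any integer $T \geq 1$ and any $k \geq 0$,
\[
\mathbb{P}[X = T+k \mid X \geq T] = \frac{e^{-(T+k)/t}}{\sum_{j \geq 0} e^{-(T+j)/t}} = (1 - e^{-1/t})\, e^{-k/t},
\]
which is exactly the pmf of a geometric random variable with parameter $p = 1 - e^{-1/t}$ supported on $\{0, 1, 2, \ldots\}$. Because $|X| \geq T \geq 1$ excludes the atom at $0$ and $\text{Lap}_{\mathbb{Z}}(t)$ is symmetric, the conditional sign of $X$ given $|X| \geq T$ is uniform on $\{\pm 1\}$. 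Sampling $X$ conditioned on $|X| \geq T$ therefore reduces to: (i) draw a sign $\sigma \in \{\pm 1\}$ uniformly; (ii) draw $Y$ from the geometric distribution above; (iii) return $\sigma(T + Y)$. The strict-inequality variant $|X| > T$ stated in the lemma is obtained by replacing $T$ with $T+1$ throughout.

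Step (i) costs one random bit and step (iii) is free. For step (ii) I would use the textbook ``count failures until success'' procedure: independently draw $B_1, B_2, \ldots \sim \text{Ber}(1 - e^{-1/t})$ and let $Y$ be the number of failures before the first success. By \cref{lem: ber_exp_sampling} applied with $\gamma = t$, each $B_i$ costs $\polylog(t)$ expected time and random bits, since by hypothesis $t$ has a $\polylog(t)$-bit binary expansion. The expected number of trials is $1/p = 1/(1 - e^{-1/t}) = O(t)$, so Wald's identity gives expected total cost $O(t \cdot \polylog(t))$ for both running time and randomness, matching the statement of the lemma.

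The proof is essentially a one-line pmf computation combined with a standard randomized routine, so the main obstacle is modest. One does have to check carefully that the ``sign times shifted geometric'' decomposition yields \emph{exactly} the target conditional distribution rather than an approximation, and handle the endpoint convention ($\geq T$ versus $> T$) consistently; both points reduce to a direct comparison of closed-form pmfs. Note also that a bound sharper than $O(t \polylog t)$ would require sampling the geometric by an efficient encoding of its binary expansion rather than by independent Bernoulli trials, but this is not needed to match the lemma's stated bound.
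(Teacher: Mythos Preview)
Your proposal is correct and takes essentially the same approach as the paper: exploit symmetry to reduce to the one-sided tail, identify that tail as a shifted geometric with parameter $1-e^{-1/t}$, and sample the geometric by repeated Bernoulli trials via \cref{lem: ber_exp_sampling}. The only cosmetic difference is that the paper justifies the ``tail is geometric'' step through the $U-V$ representation of $\text{Lap}_{\mathbb{Z}}(t)$ and memorylessness, whereas you verify it by direct pmf computation; the resulting algorithm and complexity analysis are identical.
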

\begin{proof}
    Since the discrete Laplacian is symmetric, we will assume without loss of generality that we wish to sample from the case where $X \geq T$. One can flip a coin to decide if we are in that case or the case where $X \leq -T$. Let $p=1-e^{-1/t}$. Recall from \cref{lem: discrete_lap_dist} that one way to generate $\text{Lap}_{\mathbb{Z}}(t)$ is by sampling $U, V \sim \text{Ber}(p)$ and outputting $U-V$. We now wish to sample $U, V \sim \text{Ber}(p)$ conditioned on $U-V \geq T$. Say we knew the marginal distribution of $V$, then we would first sample $v \sim V$, sample $U$ conditioned on $V=v$ and $U-V \geq T$, and then output $U-V$. So consider the marginal distribution of $U$ conditioned on $V=v$ and $U-V \geq T$.
    \[ \mathbb{P}[U=u | V=v, U-V \geq T] = \mathbb{P}[U=u | U \geq T+v]\]
    By the memorylessness property of the geometric distribution, $U \sim \text{Geo}(p)+T+v$. Therefore, regardless of what $v$ is, $U-V$ conditioned on $U-V \geq T$ is just $\text{Geo}(p)+T$. So we need is to generate $W \sim \text{Geo}(p)$ and output $W+T$. We generate $W$ by flipping a coin $\text{Ber}(1-e^{-1/t})$ until we see $1$. By \cref{lem: ber_exp_sampling}, each one of these iterations takes $\polylog(t)$ time and since the number of iterations is geometrically distributed with mean $1/p = 1/(1-e^{-1/t}) = 1/(O(1/t)) = O(t)$, the generation process should take expected time $O(t \cdot \polylog(t))$. The same reasoning holds for the randomness complexity.
\end{proof}

\begin{lemma}
    \label{lem: outside_loop_randomness}
    All operations outside of the for loop in \cref{alg: mec5} require expected randomness $O(\log(d) +pd \log(d) + \log(s))$.
\end{lemma}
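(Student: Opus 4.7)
The plan is to bound the expected randomness used by each of the three random operations outside the for loop---sampling $t$, $J$, and $\omega$---separately; computing $p$ is deterministic.

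Sampling $\omega \sim U[\{1,\dots,s\}] \cdot m$ uses $O(\log s)$ bits in expectation by the same rejection argument as in \cref{lem: approx_complexity}: draw $\lceil \log s\rceil$ bits, reject if they encode an integer outside $[1, s]$, and apply \cref{lem: rejection_sampling}.

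For $(t, J)$, I would exploit the fact that the joint distribution prescribed by the two steps in \cref{alg: mec5} is identical to that of $(|J'|, J')$ where $J' \subseteq [d]$ is obtained by including each $i$ independently with probability $p$. This lets me sample $(t, J)$ via the geometric-gap method: generate i.i.d.\ geometrics $G_1, G_2, \ldots \sim \text{Geo}(p)$ with partial sums $S_k = G_1 + \cdots + G_k$, stop the first time $S_k > d$, and set $J = \{S_1, \ldots, S_{k-1}\}$, $t = k-1$. The expected number of geometrics drawn is $pd + 1$. Each (capped) geometric is generated via the lazy inverse-CDF rule $G = \lceil \log U / \log(1-p)\rceil$: iteratively reveal bits of a uniform $U \in [0,1)$ and compare against the thresholds $(1-p)^j$ (computed incrementally by $(1-p)^j = (1-p) \cdot (1-p)^{j-1}$ to just enough precision) until either $U \in [(1-p)^j, (1-p)^{j-1})$ with $j \leq d - S_{k-1}$, or $U < (1-p)^{d-S_{k-1}}$. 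A direct calculation summing $p(1-p)^{j-1} \cdot \log\frac{1}{p(1-p)^{j-1}}$ over $j \geq 1$ bounds the expected bits revealed per geometric by $O(\log(1/p)) = O(\log d)$ under the mild condition $p = \Omega(1/d)$; the regime $p < 1/d$ is handled separately, since then $t = 0$ with $1 - o(1)$ probability and a single truncated geometric suffices. Multiplying by $pd + 1$ gives $O(\log d + pd \log d)$ bits for $(t, J)$.

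Summing the three contributions gives the claimed $O(\log d + pd \log d + \log s)$. The main technical obstacle is the bit-precision bookkeeping for the lazy inverse-CDF sampler, since $\log(1-p)$ need not have a finite binary expansion; my approach sidesteps this by working directly with the thresholds $(1-p)^j$, refining each one only to the precision required by the current comparison with the partially revealed $U$, and charging the expected bit cost against the entropy of the geometric's outcome.
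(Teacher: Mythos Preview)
Your proposal is correct in spirit but takes a genuinely different route from the paper. The paper decouples the sampling of $t$ and $J$: it samples $t \sim \text{Bin}(d,p)$ directly using the entropy-optimal point-mass sampler of \cref{thm: samp_point_mass} together with the trivial bound $H(\text{Bin}(d,p)) \leq \log_2(d+1)$, giving $O(\log d)$ bits for $t$; it then samples $J$ uniformly from $\binom{[d]}{t}$ by drawing one new element of $[d]\setminus J$ at a time via rejection (each draw $O(\log d)$ bits), so that $\mathbb{E}[X_2\mid t]=O(t\log d)$ and hence $\mathbb{E}[X_2]=O(pd\log d)$. Your geometric-gap method samples $(t,J)$ jointly and lands on the same $O(\log d+pd\log d)$ total. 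The paper's route is shorter because the precision bookkeeping you flag as ``the main technical obstacle'' is already packaged inside \cref{thm: samp_point_mass}, and the conditional sampling of $J$ given $t$ involves only uniform choices over integer sets; your route avoids invoking that theorem but must redo the lazy-threshold analysis by hand.

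One small wobble: your case split at $p=\Theta(1/d)$ is both unnecessary and imprecise. Since you already cap each geometric at the remaining budget $d-S_{k-1}$, every draw has at most $d+1$ outcomes and hence expected bits $O(\log d)$ via the interval algorithm regardless of how small $p$ is; combined with $\mathbb{E}[N]=pd+1$ geometrics this yields the bound uniformly in $p$. The assertion that $t=0$ with probability $1-o(1)$ fails for $p$ near $1/(2d)$, and ``a single truncated geometric suffices'' ignores the event $G_1\le d$, so your separate argument as written does not actually close that regime---but the capping observation renders the split moot.
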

\begin{proof}
    There are three operations to consider. Let $X_1$ be the number of bits used to sample $t$, $X_2$ be the number of bits used to sample $J$ and $X_3$ be the number of bits used to sample $s$. We wish to find $\mathbb{E}[X_1 + X_2 + X_3] = \mathbb{E}[X_1] + \mathbb{E}[X_2] + \mathbb{E}[X_3]$. We bound these separately.
    \begin{enumerate}
        \item 
        $\mathbb{E}[X_1] \leq \log(d)$ which follows from \cref{thm: samp_point_mass} and the fact that if $X \sim \text{Bin}(d, p)$, then $H(X) = O(\log_2(d))$.
        
        \item 
        To sample $J$, we set $J \leftarrow \{ \}$ and then we continually pick from $[d] \setminus J$ until $|J| = t$. Sampling a random value from $[d] \setminus J$ can be done by sampling $\log_2(d-|J|)$ bits, checking if they correspond to a value in at most $d \setminus J$, if they do, then adding that value to $J$, and otherwise resampling if they do not. This process requires at most $2\log(d)$ bits in expectation. For any fixed $t$, sampling the $i^{\text{th}}$ value of $J$ requires $2\log(d)$ bits by \cref{lem: rejection_sampling}. Let $T$ be the value of $t$. 
        \[ \mathbb{E}[X_2] 
        = \mathbb{E}_T[\mathbb{E}[X_2|T]]
        \leq \mathbb{E}_T[\mathbb{E}[2 \log(d) T] | T]
        = 2 \log(d) \mathbb{E}_T[\mathbb{E}[T] | T]
        = 2 \log(d) pd\]

        \item 
        $X_3$ can clearly be sampled using at most $O(\log s)$ bits in expectation.
    \end{enumerate}
\end{proof}

\begin{lemma}
    \label{lem: pure_randomness_complexity}
    Assume $m = \frac{d}{\varepsilon} \ln(\frac{d}{\varepsilon}) \ln(s) + 1$ and $d/\varepsilon > 10$. \cref{alg: mec5} requires $O(\frac{d}{s} \cdot \polylog \frac{d}{\varepsilon} + \log s)$ bits of randomness in expectation.
\end{lemma}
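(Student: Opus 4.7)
The plan is to decompose the expected randomness used by \cref{alg: mec5} into three components: the randomness used outside the \textbf{for} loop, the cost of the $i \in J$ branch, and the cost of the $i \notin J$ branch. The outside-loop cost is handled directly by \cref{lem: outside_loop_randomness}, which gives $O(\log d + pd\log d + \log s)$; the middle term $pd\log d$ will be absorbed once we pin down how small $p$ is under the chosen $m$.

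For the inside-loop analysis, I treat each coordinate individually and combine via linearity of expectation. When $i \in J$, the algorithm draws $\eta_i \sim \text{Lap}_{\mathbb{Z}}(d/\varepsilon)$ conditioned on $|\eta_i| \geq m$; by \cref{lem: samp_rounded_lap} this costs $O((d/\varepsilon) \cdot \polylog(d/\varepsilon))$ bits in expectation, uniformly in $m$. Since $|J| \sim \text{Bin}(d, p)$ has mean $pd$, the total expected contribution of this branch is $O(pd \cdot (d/\varepsilon) \cdot \polylog(d/\varepsilon))$. When $i \notin J$, I apply \cref{lem: 1/s_prob} (with $r$ replaced by $m$) to conclude that, over the choice of $\omega$, the floor equality $\lfloor \text{sum}(x)_i + \omega - m \rfloor_{ms} = \lfloor \text{sum}(x)_i + \omega + m \rfloor_{ms}$ fails with probability at most $2/s$; in the equality case no bits are used, and in the failing case rejection sampling produces $\eta_i$ with $|\eta_i| < m$ in $1/(1-p) = O(1)$ tries, each of cost $\polylog(d/\varepsilon)$ by \cref{lem: lap_sample} and \cref{lem: rejection_sampling}. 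Thus the $i \notin J$ branch contributes $O((d/s) \cdot \polylog(d/\varepsilon))$ in total.

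The main obstacle is showing that the $i \in J$ contribution is also bounded by $O((d/s) \cdot \polylog(d/\varepsilon))$; this is where the specific choice of $m$ matters. With $m = \frac{d}{\varepsilon}\ln(d/\varepsilon)\ln(s) + 1$, a direct substitution gives $(\varepsilon/d)(m-1) = \ln(d/\varepsilon) \ln s$, so $p \leq 2 e^{-(\varepsilon/d)(m-1)} = 2(d/\varepsilon)^{-\ln s}$. The desired inequality $pd \cdot (d/\varepsilon) \leq (d/s) \polylog(d/\varepsilon)$ reduces to showing $2 s (d/\varepsilon)^{1-\ln s} = O(\polylog(d/\varepsilon))$. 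Taking logarithms, this amounts to $\ln(d/\varepsilon) - (\ln(d/\varepsilon)-1)\ln s + O(1) = O(\log\log(d/\varepsilon))$, and the assumption $d/\varepsilon > 10$ makes $\ln(d/\varepsilon)/(\ln(d/\varepsilon)-1)$ bounded by a small constant so that the inequality holds once $\ln s$ is at least that constant (and is otherwise trivial, since $d/s$ dominates).

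Finally, I combine the three contributions. The same bound $p \leq 2(d/\varepsilon)^{-\ln s}$ used above immediately shows that $pd\log d$ is at most $O((d/s)\polylog(d/\varepsilon))$, so the middle term from \cref{lem: outside_loop_randomness} is absorbed, leaving $O(\log d + \log s)$ from the outside-loop part. Summing with the two inside-loop bounds of $O((d/s)\polylog(d/\varepsilon))$ yields the claimed $O\bigl(\tfrac{d}{s} \polylog\tfrac{d}{\varepsilon} + \log s\bigr)$.
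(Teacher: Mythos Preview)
Your proposal is correct and follows essentially the same approach as the paper: both decompose the expected randomness into the outside-loop cost (handled by \cref{lem: outside_loop_randomness}) and a per-coordinate in-loop cost split into the three cases $i \in J$, $i \notin J$ with floor equality, and $i \notin J$ without floor equality, invoking \cref{lem: samp_rounded_lap}, \cref{lem: 1/s_prob}, \cref{lem: lap_sample}, and \cref{lem: rejection_sampling} in exactly the same places. Your log-comparison showing that the $i \in J$ contribution is $O((d/s)\polylog(d/\varepsilon))$ is more explicit than the paper's bare assertion $(d/\varepsilon)e^{-(\varepsilon/d)(m-1)} = O(1/s)$, and both arguments share the same minor looseness for very small $s$.
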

\begin{proof}
    Let $X_i$ denote the expected number of bits used in iteration $i$ of the loop. We define $E_1, E_2, E_3$ as the following events respectively.
    \[ i \in J\]
    \[ (i \notin J) \cap (\lfloor \text{sum}(x)_i + \omega - m \rfloor_{ms} = \lfloor \text{sum}(x)_i + \omega + m \rfloor_{ms}) \]
    \[ (i \notin J) \cap (\lfloor \text{sum}(x)_i + \omega - m \rfloor_{ms} \neq \lfloor \text{sum}(x)_i + \omega + m \rfloor_{ms}) \]

    We will bound $\mathbb{E}[X_i]$ by bounding the conditional expectations one by one.
    \[ \mathbb{E}[X_i] = \sum_{j=1}^3 \mathbb{E}[X_i | E_j] \cdot \mathbb{P}[E_j] \]
    \begin{enumerate}
        \item $\mathbb{E}[X_i | E_1] \cdot \mathbb{P}[E_1]$. $\mathbb{E}[X_1 | E_1]$ is simply the number of random bits it takes in expectation to sample from $\text{Lap}_{\mathbb{Z}}(d/\varepsilon)$ conditioned on it being at least $m$. \cref{lem: samp_rounded_lap} tells us $\mathbb{E}[X_1 | E_1] = O(\frac{d}{\varepsilon} \polylog(\frac{d}{\varepsilon}))$. \cref{lem: discrete_lap_tail} tells us that $\mathbb{P}[E_1] \leq e^{-\varepsilon (m-1)/d}$. $\mathbb{E}[X_i | E_1] \cdot \mathbb{P}[E_1]$is therefore upper bounded by $O(\frac{d  e^{-\varepsilon (m-1)/d}}{\varepsilon} \polylog(\frac{d}{\varepsilon})) = O(\frac{1}{s} \cdot \polylog \frac{d}{\varepsilon})$.

        \item $\mathbb{E}[X_i | E_2] \cdot \mathbb{P}[E_2]$. If $E_2$ happens, the loop uses no randomness. Therefore, $\mathbb{E}[X_i | E_2] \cdot \mathbb{P}[E_2]$ is equal to 0.

        \item $\mathbb{E}[X_i | E_3] \cdot \mathbb{P}[E_3]$. First consider $\mathbb{E}[X_1 | E_3]$, the number of bits it takes to rejection sample from $\eta_i$ conditioned on $|\eta_i| < m$. The expected number of bits required for one sample from $\text{Lap}_{\mathbb{Z}}(d/\varepsilon)$ is $\polylog( \frac{d}{\varepsilon})$ by \cref{lem: lap_sample}. The probability that $|\eta_i| < m$ is the following by \cref{lem: discrete_lap_tail}.
        \[ 1 - \frac{e^{-\varepsilon (m-1)/d}}{e^{\varepsilon/d}+1}\]
        Therefore, by \cref{lem: rejection_sampling}, the following holds.
        \[ \mathbb{E}[X_i | E_3] = \frac{\polylog \frac{d}{\varepsilon}}{1 - \frac{e^{-\varepsilon (m-1)/d}}{e^{\varepsilon/d}+1}}\]
        Notice that as $d/\varepsilon \geq 10$, the denominator is bounded by a constant. Therefore, $\mathbb{E}[X_i | E_3] = \polylog \frac{d}{\varepsilon}$.
        $\mathbb{P}[E_3] = 2(1-p)/s$ by \cref{lem: 1/s_prob}. Therefore, the whole probability is upper bounded by $O( \frac{1}{s} \cdot \polylog \frac{d}{\varepsilon})$.
    \end{enumerate}

    Combining these facts tells us that $\mathbb{E}[X_i] = O(\frac{1}{s} \cdot \polylog(\frac{d}{\varepsilon}))$. The total number of bits used in the loop is the $X = X_1 + \dots + X_d$. $\mathbb{E}[X] = O(\frac{d}{s} \cdot \polylog(\frac{d}{\varepsilon}))$ by linearity of expectation.

    Finally, we must account for the randomness used before entering the loop. By \cref{lem: outside_loop_randomness}, this is $O(\log(d) + pd \log(d) + \log(s))$ in expectation. By our choice of $m$, $p \leq \varepsilon/d$. Therefore, the expected bits of randomness needed before the loop starts becomes $O(\log(d)+\log(s))$. Finally, by linearity of expectation, the total number of random bits needed for \cref{alg: mec5} in expectation is $O(\frac{d}{s} \cdot \polylog \frac{d}{\varepsilon}) + O(\log(d)+\log(s)) = O(\frac{d}{s} \cdot \polylog \frac{d}{\varepsilon} + \log s)$.
\end{proof}

\subsection{Time complexity}
\begin{lemma}
    \label{lem: pure_time_complexity}
    Assume $m = \frac{d}{\varepsilon} \ln(\frac{d}{\varepsilon}) \ln(s) + 1$ and $d/\varepsilon > 10$. \cref{alg: mec5} has expected running time $\poly(n, d, \log s, \frac{1}{\varepsilon})$.
\end{lemma}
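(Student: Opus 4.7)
The plan is to mirror the case analysis from Lemma \ref{lem: pure_randomness_complexity}, tracking expected running time in place of expected random bits. The structural argument is identical; only the per-operation cost changes.

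First, I would account for the work done before the for loop. The joint distribution of $t$ and $J$ is the same as the result of flipping $d$ independent $\text{Ber}(p)$ coins (this is exactly the argument in step 3 of \cref{lem: same_dist}), and each coin can be realized by drawing a single $\text{Lap}_{\mathbb{Z}}(d/\varepsilon)$ variate and checking $|Y| \geq m$, costing $\poly(\log(d/\varepsilon))$ time by \cref{lem: lap_sample}. This gives $\poly(d, \log(d/\varepsilon))$ expected time to produce $J$. Sampling $\omega$ uses $O(\log(ms))$ time by \cref{lem: rejection_sampling}, and I would precompute $\mathrm{sum}(x)$ once in $O(nd)$ time so that each coordinate access inside the loop is constant-time.

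Second, I would bound the expected time of a single loop iteration by decomposing over the same three events $E_1, E_2, E_3$ already introduced in the randomness analysis. For $E_2$, only rounding comparisons and assignments are performed, at total cost $\poly(\log(d/\varepsilon), \log s)$. For $E_3$, rejection-sampling a discrete Laplace until $|\eta_i| < m$ costs $\poly(\log(d/\varepsilon))$ time per draw via \cref{lem: lap_sample}, and the per-draw acceptance probability is $\Omega(1)$ under the standing assumption $d/\varepsilon > 10$ (this is precisely the calculation already done in Lemma \ref{lem: pure_randomness_complexity}); combined with $\mathbb{P}[E_3] \leq 2/s$ from \cref{lem: 1/s_prob}, the contribution is negligible. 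For $E_1$, sampling $\text{Lap}_{\mathbb{Z}}(d/\varepsilon)$ conditioned on $|\eta_i| \geq m$ costs $O\!\left(\tfrac{d}{\varepsilon} \polylog\tfrac{d}{\varepsilon}\right)$ time by \cref{lem: samp_rounded_lap}; since $\mathbb{P}[E_1] = p \leq \varepsilon/d$ (as established in Lemma \ref{lem: pure_randomness_complexity} from the choice of $m$), the amortized per-iteration contribution is only $\polylog(d/\varepsilon)$.

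Summing the three cases over the $d$ iterations by linearity of expectation and adding the pre-loop cost gives the claimed bound $\poly(n, d, \log s, \tfrac{1}{\varepsilon})$. The one delicate step, and the only place where the $\tfrac{1}{\varepsilon}$ factor appears, is case $E_1$: a single conditional Laplace draw genuinely costs $\Theta(\tfrac{d}{\varepsilon})$ time, and if this occurred at every coordinate the total would blow up to $\Omega(d^2/\varepsilon)$. What saves the analysis is exactly the phenomenon that drove the randomness savings in the first place: the choice $m-1 = (d/\varepsilon) \ln(d/\varepsilon) \ln(s)$ forces $p \leq \varepsilon/d$, so the expected number of indices landing in $J$ is at most $\varepsilon$, and the expected total time spent in case $E_1$ across the entire loop is $O(d \polylog(d/\varepsilon))$, which fits inside the polynomial budget.
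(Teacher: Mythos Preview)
Your argument is correct, and the loop analysis matches the paper's: both condition on whether $i \in J$, invoke \cref{lem: samp_rounded_lap} to cost the conditional Laplace draw at $O((d/\varepsilon)\polylog(d/\varepsilon))$, and then cancel this against $\mathbb{P}[i \in J] = p \leq \varepsilon/d$ so that each iteration contributes only $\polylog(d/\varepsilon)$ in expectation. The paper collapses your $E_2$ and $E_3$ into the single case $i \notin J$, but that is cosmetic.

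The one substantive difference is the pre-loop step. The paper analyzes \cref{alg: mec5} as literally written---sample $t \sim \mathrm{Bin}(d,p)$ via the point-mass sampler of \cref{thm: samp_point_mass}, then draw $J$ uniformly from $\binom{[d]}{t}$---and must argue (which it does only by assertion) that the binomial probabilities $\binom{d}{i}p^i(1-p)^{d-i}$ admit efficient approximations. You instead realize $J$ by flipping $d$ independent $\mathrm{Ber}(p)$ coins, each simulated by a fresh Laplace draw and the test $|Y| \geq m$. This is cleaner and bypasses \cref{thm: samp_point_mass} entirely, but strictly speaking it analyzes a re-implementation of the pre-loop rather than the two-step procedure \cref{alg: mec5} actually specifies. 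Since the distributions coincide (as you note via step~3 of \cref{lem: same_dist}) and the target is only a polynomial bound, this is harmless; it just means your argument establishes the efficiency of the mechanism rather than of the exact pseudocode.
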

\begin{proof}
    We first consider the running time of all operations which take place outside of the loop. We sample $t \sim \text{Bin}(d, p)$ using \cref{thm: samp_point_mass} with probabilities $p_i = {d \choose i} p^i (1-p)^{d-i}$. We claim without proof that each of these probabilities has a $\poly(d, \log s, \log \frac{1}{\varepsilon})$ bit description and can be calculated to within $2^{-t}$ using time $\poly(d, \log s, \log \frac{1}{\varepsilon}, t)$ time. This is simply a matter of looking at the Taylor expansion of $p$ and calculating it up to sufficient accuracy to ensure that the approximation of $p_i$ is within $2^{-t}$ of $p_i$. \cref{thm: samp_point_mass} therefore tells us that we can sample $t$ using $\poly(d, \log s, \frac{1}{\varepsilon})$ bits in expectation.

    The second operation of the loop can be written as the sum of $X_i$ where $X_i$ is the expected time it requires to draw the $i^{\text{th}}$ item of $J$. \cref{lem: rejection_sampling} tells $\mathbb{E}[X_i] = O(\log d)$. Linearity of expectation tells us that, the total time required for this step in expectation is $d \cdot \mathbb{E}[X_i] = O(d \log d)$.

    Sampling $\omega$ can clearly be done in $O(\log s)$ time.

    Now let $Y_i$ denote the expected running time of loop iteration $i$.
    \[ \mathbb{E}[Y_i] = \mathbb{E}[Y_i | i \in J] \cdot \mathbb{P}[i \in J] + \mathbb{E}[Y_i | i \notin J] \cdot \mathbb{P}[i \notin J]\]

    $\mathbb{E}[Y_i | i \in J]$ is some $T = O( \frac{d}{\varepsilon} \polylog(\frac{d}{\varepsilon}))$ by \cref{lem: samp_rounded_lap}. $\mathbb{E}[Y_i | i \notin J]$ is $\polylog(\frac{d}{\varepsilon})/\mathbb{P}[\text{Lap}_{\mathbb{Z}}(d/\varepsilon) < m]$ by \cref{lem: lap_sample} and \cref{lem: rejection_sampling}.

    \begin{align*}
        \mathbb{E}[Y_i] &= T \cdot p + \frac{\polylog(\frac{d}{\varepsilon})}{\mathbb{P}[\text{Lap}_{\mathbb{Z}}(d/\varepsilon) < m]} (1-p)\\
        &= T \cdot p + {\polylog \left( \frac{d}{\varepsilon} \right)}\\
        &= T \cdot \frac{2e^{-\varepsilon(m-1)/d}}{e^{\varepsilon/d}+1} + {\polylog \left( \frac{d}{\varepsilon} \right)}\\
        &\leq T \cdot 2e^{-\varepsilon(m-1)/d} + {\polylog \left( \frac{d}{\varepsilon} \right)}\\
        &\leq T \cdot \frac{\varepsilon}{d} + {\polylog \left( \frac{d}{\varepsilon} \right)}\\
        &\leq {\polylog \left( \frac{d}{\varepsilon} \right)}\\
    \end{align*}

    By linearity of expectation, the total running time spent in the loop in expectation is $\mathbb{E}[Y_1 + \dots + Y_d] = d \polylog(\frac{d}{\varepsilon})$. Finally, computing $\text{sum}(x)$ takes $\poly(n, d)$ time. Therefore, the total running time of \cref{alg: mec5} is $\poly(n, d, \log s, \log \frac{1}{\varepsilon
    })$ time in expectation.
\end{proof}

\subsection{Accuracy}
\begin{lemma}
    \label{lem: pure_accuracy}
    Let $\varepsilon$, $s$, and $m = \frac{d}{\varepsilon} \ln(\frac{d}{\varepsilon}) \ln(s) + 1$ be the input parameters to \cref{alg: mec5}. Then \cref{alg: mec5} is $(\alpha, \beta)$ accurate for the following $\alpha$.
    \[ \alpha = O\left( \frac{d \cdot \log(d/\beta)}{\varepsilon} + \frac{s \ln(s) \cdot d \ln(\frac{d}{\varepsilon})}{\varepsilon} \right) \]
\end{lemma}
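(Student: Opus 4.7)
The plan is to analyze \cref{alg: mec2} instead of \cref{alg: mec5}, since by \cref{lem: same_dist} the two have identical output distributions, and \cref{alg: mec2} is a much more transparent mechanism to reason about.

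For each coordinate $i$, \cref{alg: mec2} outputs $y_i = \lfloor \text{sum}(x)_i + \omega + \eta_i \rfloor_{ms}$, where $\omega \in \{m, 2m, \dots, sm\}$ and $\eta_i \sim \text{Lap}_{\mathbb{Z}}(d/\varepsilon)$. The triangle inequality gives the deterministic bound
\[ |y_i - \text{sum}(x)_i| \leq |\omega| + |\eta_i| + ms \leq 2ms + |\eta_i|, \]
since $|\omega| \leq ms$ and rounding down to a multiple of $ms$ shifts by at most $ms$. Plugging in $m = \frac{d}{\varepsilon}\ln(\frac{d}{\varepsilon})\ln(s)+1$, the deterministic $2ms$ term already contributes $O(\frac{s\ln(s)\cdot d\ln(d/\varepsilon)}{\varepsilon})$, which is exactly the second term in the desired $\alpha$.

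It remains to control the noise terms $|\eta_i|$ simultaneously over all $i \in [d]$. First I would apply \cref{lem: discrete_lap_tail} with parameter $t = d/\varepsilon$, which gives
\[ \mathbb{P}[|\eta_i| \geq k] \leq 2 e^{-\varepsilon(k-1)/d} \]
for any positive integer $k$. Setting the right-hand side equal to $\beta/d$ and solving yields $k = O(\frac{d\log(d/\beta)}{\varepsilon})$. A union bound over the $d$ coordinates then shows that with probability at least $1-\beta$, every $|\eta_i|$ is simultaneously at most $O(\frac{d\log(d/\beta)}{\varepsilon})$. Combining this with the deterministic bound above yields
\[ \|y - \text{sum}(x)\|_\infty = O\!\left(\frac{d\log(d/\beta)}{\varepsilon} + \frac{s\ln(s)\cdot d\ln(d/\varepsilon)}{\varepsilon}\right) \]
with probability at least $1-\beta$, which is the accuracy claim.

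There is no real obstacle here, since the privacy-preserving tricks in \cref{alg: mec5} (the set $J$, the rejection loops, the shortcut when the rounding is unambiguous) all wash out: the distribution equivalence lemma lets us pretend we are simply adding fresh discrete Laplace noise plus a uniform shift and then rounding. The only thing to be careful about is to invoke \cref{lem: same_dist} at the very start so that all subsequent probabilistic statements are legitimately about the output of \cref{alg: mec5}, and to check that the constants inside the tail bound are handled correctly (the factor of $2$ from absolute value and the $-1$ shift in the exponent are absorbed into the $O(\cdot)$).
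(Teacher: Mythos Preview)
Your proposal is correct and follows essentially the same route as the paper: reduce to \cref{alg: mec2} via \cref{lem: same_dist}, bound the deterministic shift-plus-rounding by $2ms$, and control $\|\eta\|_\infty$ by the discrete Laplace tail bound plus a union bound over the $d$ coordinates. The only cosmetic difference is that the paper picks the threshold $m' = \frac{d}{\varepsilon}\ln\frac{d}{\beta}$ up front rather than solving for it, but the computation is the same.
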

\begin{proof}
    We will show that the theorem holds from \cref{alg: mec2} instead of \cref{alg: mec5} since \cref{alg: mec2} and \cref{alg: mec5} have the same output distribution by \cref{lem: same_dist}.
    Let $m' = \frac{d}{\varepsilon} \ln \frac{d}{\beta}$. A union bound over $\eta_i$ and \cref{lem: discrete_lap_tail} tell us that $|(\eta_1, \dots, \eta_d)|_\infty > m'$ with probability at most $\beta$. The addition of $(\omega, \dots, \omega)$ then changes each coordinate of $\text{sum}(x)+\eta$ by at most $sm$, the same applies for rounding down to the nearest multiple of $ms$. Therefore, with probability at most $\beta$,
    \[ | \lfloor \text{sum}(x) + \eta + (\omega, \dots, \omega) \rfloor_{ms} - \text{sum}(x)|_\infty >  \frac{d}{\varepsilon} \ln \frac{d}{\beta} + 2ms .\]
    Plugging in $m = \frac{d}{\varepsilon} \ln(\frac{d}{\varepsilon}) \ln(s) + 1$ yields the desired result.
\end{proof}

\appendix
\section{Sampling from efficiently approximable $p_1, \dots, p_k$}
The following theorem describes an algorithm to sample from a discrete probability distribution where each (possibly irrational) probability has an efficiently computable series of approximations. The algorithm proposed is essentially the same as those presented in the classic works on sampling \cite{knuth1976complexity, hoshi2002interval} where we are explicit about time complexity subtleties.
\begin{theorem}
    \label{thm: samp_point_mass}
    Let $p_1, \dots, p_k \in (0, 1)$ be probabilities summing up to 1 with $b$ bit descriptions such that approximating $p_i$ to within $2^{-t}$ can be done in $\poly(b, t)$. Let $X$ be the random variable which has weight $p_i$ on the integer $i$.
    There exists an algorithm which samples from $X$, runs in expected time $\poly(k, b)$, and uses the following number of random bits in expectation.
    \[ H(X) + O(1) = - \sum_{i=1}^k  p_i \log_2(p_i) + O(1)\]
\end{theorem}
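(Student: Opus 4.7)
The plan is to implement the classical Knuth--Yao interval algorithm \cite{knuth1976complexity}, adapted to handle possibly irrational probabilities via successive approximation in the style of Hoshi \cite{hoshi2002interval}. I view the sampling process as generating $U$ uniformly on $[0,1)$ one bit at a time: after $t$ coin flips we know $U \in I_t := [u_t, u_t + 2^{-t})$ for some dyadic $u_t$. Let $F_0 = 0$ and $F_j = p_1 + \cdots + p_j$. In the idealized version, the algorithm halts at depth $t$ and outputs $j$ as soon as $I_t \subseteq [F_{j-1}, F_j)$, otherwise drawing another bit. Correctness is immediate since the only way never to halt is $U \in \{F_1, \ldots, F_{k-1}\}$, a measure-zero event.

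The expected bit count of the idealized algorithm is bounded via the classical Knuth--Yao argument: the set of halting bit-strings outputting $j$ forms a prefix code of total probability mass $p_j$, and a greedy analysis of the associated discrete-distribution-generating tree shows the expected length of this code is at most $\log_2(1/p_j) + 2$. Averaging with weights $p_j$ yields an expected bit count of at most $H(X) + 2$. For the running time, the probability that the algorithm is still active at depth $t$ is at most $(k-1) \cdot 2^{-t}$, since each boundary $F_j$ lies in exactly one of the $2^t$ equal dyadic cells at depth $t$; multiplying by the per-step work (dominated by $k-1$ approximations) and summing over $t$ gives $\poly(k,b)$.

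To implement the halting check with irrational $F_j$, at depth $t$ I compute approximations $\tilde F_j$ to accuracy $\epsilon_t := 2^{-(t + \lceil \log_2 k \rceil + c)}$ for a sufficiently large constant $c$. This costs $\poly(b, t)$ per boundary by hypothesis. I then declare a halt with output $j$ only if $I_t \subseteq [\tilde F_{j-1} + \epsilon_t, \tilde F_j - \epsilon_t)$, which certifies $I_t \subseteq [F_{j-1}, F_j)$. Coupling the exact and approximate algorithms on the same bit sequence, the approximate one can only be ``stuck'' past the exact halting time if some $F_j$ lies within $2\epsilon_t$ of an endpoint of $I_t$; because $\epsilon_t$ shrinks strictly faster than $2^{-t}$, a geometric-tail calculation over $t$ and $j$ bounds the expected excess depth by $O(1)$. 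Combined with the previous paragraph this yields the claimed $H(X) + O(1)$ expected bit count and $\poly(k, b)$ expected running time.

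The main obstacle is making this last point fully rigorous: ensuring the approximation overhead is $O(1)$ uniformly, independent of $k$ and $b$. The key input is that $\epsilon_t$ shrinks strictly faster than the dyadic interval length $2^{-t}$, which lets the per-boundary ``still uncertain'' probability telescope into a geometric series whose total sum is absorbed by the additive constant; the rest of the argument is a standard invocation of the Knuth--Yao / Hoshi analysis cited in the statement.
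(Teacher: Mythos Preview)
Your approach is the same as the paper's (bit-by-bit generation of $U$, halting when the current dyadic cell $I_t$ is certified to lie inside some $[F_{j-1},F_j)$ using approximate boundaries), and your decomposition $\mathbb{E}[T]=\mathbb{E}[T^*]+\mathbb{E}[T-T^*]$ is a legitimate route. The gap is in your justification that $\mathbb{E}[T-T^*]=O(1)$: you attribute it to $\epsilon_t$ shrinking strictly faster than $2^{-t}$, but that is not what controls the excess. The endpoints of $I_t$ live on a grid of mesh $2^{-t}$, so for each boundary $F_j$ the event ``$F_j$ lies within $2\epsilon_t$ of an endpoint of $I_t$'' has probability either $0$ or exactly $2^{-t}$, regardless of how small $\epsilon_t$ is (as long as $2\epsilon_t<2^{-t}$, an interval of length $4\epsilon_t$ contains at most one grid point). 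A union bound over the $k-1$ interior boundaries then gives only $\mathbb{P}[T>t,\ T^*\le t]\le O(k\cdot 2^{-t})$, which sums over $t$ to $O(k)$, not $O(1)$. Your choice $\epsilon_t=2^{-(t+\lceil\log_2 k\rceil+c)}$ does not buy the $k$-independent constant; indeed the paper simply takes accuracy $2^{-t}$ and still obtains $H(X)+O(1)$.

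The claim $\mathbb{E}[T-T^*]=O(1)$ is nonetheless true, but for a different reason, and it is exactly the reason the paper's proof uses: condition on the output. Given output $j$, the idealized algorithm cannot halt before depth $t_j:=\lceil\log_2(1/p_j)\rceil$ (no dyadic cell of length exceeding $p_j$ fits inside $[F_{j-1},F_j)$), so the excess sum restricted to output $j$ only starts at $t=t_j$ and is at most $\sum_{t\ge t_j}2\cdot 2^{-t}\le 4\cdot 2^{-t_j}\le 4p_j$; summing over $j$ gives $O(1)$. The paper packages this as a single tail bound $\mathbb{P}[B>t\mid A(r)=i]\le\min\bigl(1,\ O(2^{-t})/p_i\bigr)$ on the approximate algorithm directly, which sums to $-\log_2 p_i+O(1)$ and then averages to $H(X)+O(1)$, bypassing the idealized/overhead split. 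Your route can be completed, but only via this per-output conditioning, not via the $\epsilon_t$ rate argument you propose.
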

\begin{proof}
    Let $P_0 = 0$, $P_i = \sum_{j=1}^i p_j$, and $Q_i^t$ be the approximation of $P_i$ such that $|P_i-Q_i| \leq 2^{-t}$. The algorithm is as follows.
    \begin{algorithm}
    \caption{SamplePointMass}
    \label{alg: sample_point_mass}
    \begin{algorithmic}
    \State $u \leftarrow 0$
    \For {$t$ from $1$ to $\infty$}
    \If {there exists $i$ such that $u+2^{-t} \leq Q_i^t - 2^{-t}$ and $u-2^{-t} \geq Q_{i-1}^t+2^{-t}$}
        \State \Return $i$
    \EndIf
    \State $u_t \leftarrow \text{Ber}(1/2)$
    \State $u \leftarrow u+2^{-t} \cdot u_t$
    
    \EndFor
    \end{algorithmic}
    \end{algorithm}

    We first show that \cref{alg: sample_point_mass} samples from $X$. Consider the measure preserving map from infinite bitstrings to $[0, 1]$, $\phi: (u_1, u_2, \dots) \mapsto \sum_{i=1}^\infty 2^{-i} \cdot u_i$. Note that the set of infinite bitstrings on which our algorithm terminates is exactly $\phi^{-1}((P_{i-1}, P_{i}))$ since it is on those sets that we will eventually find a $t$ such that we know $u$ with enough precision to say that $u+2^{-t} \leq Q_i^t - 2^{-t}$ and $u-2^{-t} \geq Q_{i-1}^t+2^{-t}$. Since $\phi$ is measure preserving, the measure of those bitstrings is exactly the measure of the interval $(P_{i-1}, P_{i})$, which is $p_i$.
     
     Next, we show that the the loop in \cref{alg: sample_point_mass} runs for $H(X)+O(1)$ iterations in expectation. This is equivalent to showing that \cref{alg: sample_point_mass} requires $H(X)+O(1)$ bits of randomness in expectation.
     

    Let $B$ be the number of iterations for which the algorithm runs. Let $r \in \{ 0, 1\}^{\infty}$ be the value on the randomness tape, $u = \phi(r)$, and $A$ be our algorithm. To bound $B$, we will condition on the output of the algorithm. Notice that $B > x$ implies $u + 2^{-x} > Q_i - 2^{-x}$ or $u - 2^{-x} < Q_{i-1} + 2^{-x}$. Respectively, this implies $u + 3 \cdot 2^{-x} > Q_i + 2^{-x} \geq P_i$ or $u - 3 \cdot 2^{-x} < Q_{i-1} - 2^{-x} \leq P_{i-1}$. So the probability that $B > x$ is upper bounded by the probability that $u + 3 \cdot 2^{-x} > P_i$ or $u - 3 \cdot 2^{-x} < P_{i-1}$, which is exactly $6 \cdot 2^{-x}$.
    \begin{align*}
        \mathbb{E}[B | A(r) = i] &= \sum_{x=0}^\infty \mathbb{P}[B > x | A(r) = i]\\
        &= \sum_{x=0}^{-\log_2(p_i)+2} \mathbb{P}[B > x | A(r) = i] + \sum_{x=-\log_2(p_i)+2}^{\infty} \mathbb{P}[B > x | A(r) = i]\\
        &\leq -\log_2(p_i)+3 + \sum_{x=-\log_2(p_i)+2}^{\infty} \mathbb{P}[B > x | A(r) = i]\\
        &\leq -\log_2(p_i)+3 + \sum_{x=-\log_2(p_i)+2}^{\infty} \frac{\mathbb{P}[B>x \cap A(r) = i]}{A(r) = i}\\
        &= -\log_2(p_i)+3 + \sum_{x=-\log_2(p_i)+2}^{\infty} \frac{6 \cdot 2^{-x}}{p_i}\\
        &= -\log_2(p_i)+ O(1)\\
    \end{align*}
    We can now use this to bound the expected number of iterations.    
    \begin{align*}
        \mathbb{E}[B] &= \sum_{i=1}^k \mathbb{P}[A(r) = i] \cdot \mathbb{E}[B | A(r) = i]\\
        &= \sum_{i=1}^k p_i \cdot (-\log_2(p_i) + O(1))\\
        &= \sum_{i=1}^k - p_i \cdot \log_2(p_i) + \sum_{i=1}^k O(1) \cdot p_i\\
        &= H(X) + O(1)
    \end{align*}

    Finally, to determine the expected number of operations, let $Y_t$ denote the expected number of operations performed in iteration $t$ of the loop ($Y_t=0$ if the loop has terminated by that point). Let $Y = \sum_{t=1}^\infty Y_t$. Again, we condition on the output of the algorithm.
    \begin{align*}
        \mathbb{E}[Y] &= \sum_{i=1}^k \mathbb{E}[Y|A(r)=i] \cdot \mathbb{P}[A(r)=i]\\
        &= \sum_{i=1}^k \mathbb{E}[Y|A(r)=i] \cdot p_i\\
        &=\sum_{i=1}^k \sum_{t=1}^\infty \mathbb{E}[Y_t|A(r)=i] \cdot p_i\\
        &=\sum_{i=1}^k \sum_{t=1}^\infty \mathbb{E}[Y_t|A(r)=i, Y_t \neq 0] \cdot \mathbb{P}[Y_t \neq 0 | A(r) = i] \cdot p_i
    \end{align*}

    Note that the only way that $Y_t \neq 0$ given that $A(r)=1$ is if the loop is still running at iteration $t$.  This is the probability that $u$ is within $O(2^{-t})$ of $P_{i-1}$ or $P_i$ conditioned on $u \in (P_{i-1}, P_i)$, which is $O(2^{-t})/p_i$. Let $c$ be some constant.
    \begin{align*}
        \mathbb{E}[Y] &=\sum_{i=1}^k \sum_{t=1}^\infty \mathbb{E}[Y_t|A(r)=i, Y_t \neq 0] \cdot \frac{O(2^{-t})}{p_i} \cdot p_i\\
        &=c \cdot \sum_{i=1}^k \sum_{t=1}^\infty \mathbb{E}[Y_t|A(r)=i, Y_t \neq 0] \cdot 2^{-t}\\
        &=c \cdot \sum_{i=1}^k \sum_{t=1}^\infty \poly(k, b, t) \cdot 2^{-t} \\
        &=c \cdot \sum_{i=1}^k \poly(k, b)\\
        &= \poly(k, b)
    \end{align*}
\end{proof}

\bibliographystyle{alpha}
\bibliography{main}

\end{document}